\def\thecomma{\ifx,\thenext \else\ifx;\thenext \else\ifx.\thenext \else\ifx!\thenext \else\ifx:\thenext \else \  \fi\fi\fi\fi\fi}
\def\condblank{\futurelet\thenext\thecomma}
\def\ie{{\it i.e.}\condblank}
\def\eg{{\it e.g.}\condblank}
\def\estimatename{\scshape\bf Estimate}
\newtheorem{estimate}{\estimatename}[section]
\def\fref#1{Fig.~\ref{#1}}
\def\tabref#1{Table~\ref{#1}}
\def\cref#1{Condition~\ref{#1}}
\def\eref#1{(\ref{#1})}
\def\pref#1{Proposition~\ref{#1}}
\def\OO{\mathcal{O}}
\def\TT{\mathcal{T}}
\def\eref#1{Eq.~(\ref{#1})}
\def\fref#1{Fig.~\ref{#1}}
\def\sref#1{Sect.~\ref{#1}}
\let\epsilon=\varepsilon
\let\rho=\varrho
\def\nr{n_{\rm r}}
\def\nb{n_{\rm b}}
\def\rr{{\rm rr}}
\def\rb{{\rm rb}}
\def\bb{{\rm bb}}
\def\dE{{\rm d}E}
\def\0{{\tt0}}
\def\+{{\tt+}}
\def\m {{\tt-}\condblank}
\def\mm {{\tt--}\condblank}
\let\theta=\vartheta
\begin{document}

\title{Decay of Correlations in a Topological Glass}
\author{Jean-Pierre Eckmann${}^{{\rm a,b}}$ and Maher Younan
${}^{\rm a}$\\
${}^{\rm a}${{\em D\'epartement de Physique Th\'eorique, Universit\'e de
Gen\`eve;\\${}^{\rm b}$Section de Math\'ematiques, Universit\'e de
Gen\`eve}}\\
\vspace{6pt}\received{Received 00 Month 2011; final version received
  00 Month 2011} }
\maketitle
\thispagestyle{empty}

\centerline{Dedicated to David Sherrington, with admiration and best wishes}
\vspace{6pt}
\begin{abstract}
In this paper we continue the study of a topological glassy system. 
The state space of the model
is given by all triangulations of a sphere with $N$ nodes, half of which
are red and half are 
blue. Red nodes want to have 5 neighbors while blue ones want
7. Energies of nodes with 
other numbers of neighbors are supposed to be positive. The dynamics
is that of flipping 
the diagonal between two adjacent triangles, with a temperature dependent
probability. We consider the system at very low temperatures.

We concentrate on several new aspects of this model: Starting from a
detailed description of the stationary state, we conclude that pairs
of defects
(nodes with the ``wrong'' degree) move with very high mobility along
1-dimensional paths.  As they
wander around, they encounter single defects, which they then move
``sideways'' with a geometrically defined probability. This induces
a diffusive motion of the single defects. If they meet, they
annihilate, lowering the energy of the system. We both estimate the
decay of energy to equilibrium, as well as the correlations. In
particular, we find a decay like $t^{-0.4}$.

\end{abstract}

\section{Introduction, the Model}\label{s:model}
This paper deals with a species of a class of models on topological
studies of triangulations. Such models have been studied in several
contexts 2-d gravitation, froth, \cite[and references
    therein]{schliecker2002}. The 
variant we use here was introduced in \cite{eckmannglass2007}, but it turned out
that a very similar study was initiated earlier by Aste and
Sherrington \cite{astesherrington1999}. So, we hope that David will accept this paper
as a small sign of recognition. 

We reconsider here the model which was inspired by \cite{Proglass2007}
and introduced in \cite{eckmannglass2007}. For
completeness we repeat the definition of the model: We fix a (large)
number $N$ of nodes, half of which are red, and the other half blue. These
nodes are the nodes of a topological triangulation $T$ of the sphere
$S^2$. The set of all possible such labelled triangulations will be denoted $\TT_N$.
We define a dynamics on $\TT_N$ by the following 
Metropolis algorithm whose elementary steps are
flips (T1 moves): A link is chosen uniformly at random (among the
$3N-6$ links). In \fref{f:abcd}, if the link AB was chosen
then the flip consists in replacing it
by the link CD. This move is not admissible if the link CD already exists
before the move. Otherwise it is admissible. Note that the number of
nodes, $N$, does not change in this model. However, we will be
interested in the behavior for $N\to\infty$.

The Metropolis algorithm is based on the
energy function $E$ on $\TT_N$ which, for any triangulation
$T\in\TT_N$, is defined as 
\begin{equ}
  E(T)= \sum_{i\in \text{blue}} (d_i-7)^2 + \sum_{i\in \text{red}}
  (d_i-5)^2~,
\end{equ}
where $d_i$ is the degree (number of links) of the node $i$. Thus,
this energy favors 7 links for the blue nodes and 5 for the red
ones. {\it Mutatis mutandis}, the detailed definition of the energy is
not important for the discussion of the model, and we will stick to
this particular form of the energy. Given an admissible flip, compute
the energy of the triangulation before and after the flip; this
defines 
\begin{equ}
  \dE = E_{\text{after}} -E_{\text{before}}~.
\end{equ}
An admissible flip is performed if
either $\dE \le 0$ or, when $\dE>0$, with probability $\exp (-\beta\dE)$,
where $\beta $ is the inverse temperature of the system.

Several properties of this model were discussed in \cite{eckmannglass2007}, but here
we study in more detail the dynamical properties of the model. In
particular, we introduce a ``charge''
defined as follows:
\begin{definition}
  The \emph{charge} of a red node is defined by $d_i-5$ and the charge of a
  blue node is defined by $d_i-7$. We will say the charge is a
  \emph{defect} \+ if it
  is $+1$ and \m if it is $-1$. In general, the color of the charge
  will not matter and will not be mentioned. 
\end{definition}
In principle, all charges
  between $-4$ and $\OO(N)$ can occur, since $d_i\ge3$, but, obviously, at low
  temperatures mostly the charges \+, \0, and \m will come into play.

\section{Equilibrium and the Approximation of the Dynamics}\label{s:approximation}
The dynamics of the model is given by the Metropolis algorithm. In it,
a link is chosen uniformly at random among all possible links. The
change of energy induced by the  flipping of this link is called
$\dE$. If $\dE\le 0$ the flip is performed, if $\dE>0$ the flip is
performed with probability
$p(\dE)=\exp(-\beta \dE)$. This process satisfies detailed balance, and most
of the paper is dealing with the equilibrium properties of this
process at low temperatures. Because of the detailed balance, the equilibrium measure
$\mu$ has the property that the probability to see a given state whose
energy is $E$ is proportional to $\exp(-\beta E)$.  We use this
elementary observation to argue that at low temperature there are only few
defects, by which we mean that there are few red nodes whose degree is
not 5 and also few blue ones whose degree is not 7. Given that there
are few of these ``defects'', we further assume that the ``positions''
of these defects are random in the sense that there are no strong
conditional expectations: For example, having a defect +1 does not say
that there is a defect -1 close-by. The upshot of this way of
reasoning, which we corroborate by numerical studies, is that one can
approximate the dynamics by just looking at defects.

Indeed, the full dynamics must be described by the evolution of
correlation functions.
It would have to take into account correlation
functions between the charges (and the colors) of, say, the 4 nodes on
a pair of triangles sharing an edge. Then, flipping that edge, the
correlations of many neighboring triangles would be changed
simultaneously, and this would necessitate considering a full hierarchy of
correlations (like BBGKY).  
What we will see is that in this model, these higher
order correlation functions do not influence our
basic understanding of what is going on. 

In contrast, the Euler relations play a small but not totally
negligible role for the sizes of the systems we consider.

\section{Description of the Stationary State}

It will be useful to define throughout the paper the natural parameter
\begin{equ}
  \epsilon \equiv e^{-\beta }~.
\end{equ}
We are interested in a regime where the density $c$ of charges (which
equals $E/N$) is low
but also, where the number $c\cdot N$ of charges is large, so that good
statistics and a certain independence of the Euler relations is
attained.
More precisely, we fix $\rho\ll 1$ and $D_0\gg 1$, and require
$\epsilon \le \rho$ and $N\epsilon >D_0$. We furthermore consider the limit of
large $N$.

The main result of this section is summarized in the following proposition:
\begin{proposition}
Consider an equilibrium state at temperature $T\ll1$ satisfying the
above conditions on $N$ and $\epsilon$.
 \begin{enumerate}
  \item At first order in $\epsilon$, the only charges present in the
    system are   simple defects $\pm 1$. Their density is
    $2\epsilon+\OO(\epsilon ^2)$.
  \item The distribution of the colors (red or blue) is independent in
    the limit $\epsilon \to0$. 
  \item The distribution of the charges is independent in the limit
    $\epsilon \to 0$.
 \end{enumerate}

 \begin{remark}
   The meaning of $\epsilon \to 0$ above is that the quantities become
   more and more decorrelated as $\epsilon \to0$ while still
   maintaining the inequalities $\epsilon \le \rho$ and $N\epsilon >D_0$. 
 \end{remark}

\end{proposition}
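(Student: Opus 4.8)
The plan is to use detailed balance, which fixes the stationary measure as the Gibbs measure $\mu(T)\propto \epsilon^{E(T)}$, and then to rewrite the weight in terms of the charges alone. Writing $c_i$ for the charge of node $i$, the energy is the purely local sum $E(T)=\sum_i c_i^2$, so the weight is $\prod_i \epsilon^{c_i^2}$. The only global coupling visible at this stage is the topological one: the Euler relation gives $\sum_i d_i = 2(3N-6)=6N-12$, while the preferred degrees sum to $5\cdot\tfrac N2+7\cdot\tfrac N2 = 6N$, whence
\begin{equ}
  \sum_i c_i = -12~.
\end{equ}
I would pass from the triangulation $T$ to the charge field $\{c_i\}$ as the basic random object, summing over all triangulations carrying a prescribed set of charges. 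This yields a measure $\mu(\{c_i\})\propto \Omega(\{c_i\})\prod_i\epsilon^{c_i^2}$, where $\Omega(\{c_i\})$ counts the admissible triangulations (with the fixed colouring) realizing that charge field. The whole proposition then reduces to statements about $\Omega$ in the stated regime.

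For item 1, I would first argue that only $\pm1$ survive at first order: any charge with $|c|\ge 2$ carries weight $\epsilon^{c^2}$ with $c^2\ge4$, whereas splitting it into $|c|$ unit defects lowers the energy weight to $\epsilon^{|c|}$ and gains an entropic factor of order $N$ from the extra independent positions, so such charges are suppressed and do not contribute at order $\epsilon$. For the value of the density I would establish that, in the dilute regime, $\Omega$ factorizes,
\begin{equ}
  \Omega(\{c_i\}) = \Omega_0\,\prod_i \omega(c_i)\,\bigl(1+o(1)\bigr)~,
\end{equ}
with a local multiplicity $\omega$ that is, to leading order, the same for $c=0,\pm1$ and blind to colour. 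The charge field then becomes an approximate product measure with single-site law $\propto\omega(c)\,\epsilon^{c^2}$; normalizing gives $P(c=\pm1)=\epsilon+\OO(\epsilon^2)$ at each site, so the density $E/N=\sum_c c^2P(c)$ equals $2\epsilon+\OO(\epsilon^2)$.

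Items 2 and 3 are exactly this factorization read at the level of marginals. Once $\mu(\{c_i\})$ is an approximate product over sites, the charges at distinct nodes decorrelate as $\epsilon\to0$ (item 3), and because the energy depends on $c_i$ only, not on the colour carrying it, the colour label attached to a defect is asymptotically independent of its charge and of the data at other sites, with marginal fixed at $\tfrac12,\tfrac12$ by the balanced colouring (item 2). It remains to dispose of the global constraint $\sum_i c_i=-12$, which I would remove by an equivalence-of-ensembles estimate: under the unconstrained product law $\sum_i c_i$ has mean $\OO(1)$ and standard deviation of order $\sqrt{N\epsilon}$, which is $\gg 12$ since $N\epsilon>D_0\gg1$, so a local central limit argument shows that conditioning on the value $-12$ perturbs any fixed local marginal only by $\OO\bigl(1/(N\epsilon)\bigr)$, vanishing in the limit. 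In particular the excess of twelve $-1$'s over $+1$'s is spread over $\sim N\epsilon$ defects of each sign and is invisible at leading order, consistent with the symmetric density $2\epsilon$.

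The hard part is the factorization of $\Omega$. Unlike the energy, the multiplicity encodes the full geometric coupling of the planar triangulation: the degrees of neighbouring nodes are not free, and rearranging the connectivity around one defect can in principle affect the count available to another. The content of the proposition is that in the dilute regime — mean defect spacing of order $\epsilon^{-1/2}$, far larger than the range of any local rearrangement — these correlations decay, so that the environments of distinct defects become independent and $\Omega$ factorizes up to $o(1)$ (and the leading-order equality of $\omega(0)$ and $\omega(\pm1)$ is what pins the density coefficient to exactly $2$). This is precisely the ``no strong conditional expectations'' hypothesis of \sref{s:approximation}, and it is the step I expect to require the most care, together with the numerical corroboration mentioned there; by contrast the removal of the Euler constraint is standard once $N\epsilon\to\infty$.
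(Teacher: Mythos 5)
Your argument follows essentially the same route as the paper: detailed balance gives the Gibbs weight $\epsilon^{c^2}$ per node, charges with $|c|\ge2$ are suppressed as $\OO(\epsilon^4)$, the four species of unit defects (red with degree $4$ or $6$, blue with degree $6$ or $8$) each occur with expected number $\epsilon N/2$ yielding the density $2\epsilon+\OO(\epsilon^2)$, and the Euler constraint $\sum_i c_i=-12$ is disposed of exactly as you suggest, via $p_{\tt\pm}=\epsilon\mp6/N+\dots$ being dominated by $\epsilon$ once $N\epsilon\gg1$. The factorization of the multiplicity $\Omega$ that you correctly single out as the hard step is not proved in the paper either: it is adopted as a closure hypothesis, corroborated numerically, and the paper even quantifies its failure (it computes $p_{\dE=-2}\approx3.89\,\epsilon^3$ instead of the $4\epsilon^3$ exact independence would give, a correlation of order $0.1\,\epsilon^3$ that is then neglected in the limit $\epsilon\to0$).
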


\subsection{Energy of the stationary state}
In this paragraph, we will calculate the energy of the
stationary state in the limit specified above, as a function of the temperature.

\begin{estimate}\label{est:1}
Consider the region $\epsilon N>D_0$ and $\epsilon <\rho$.
For sufficiently large $D_0$ and sufficiently small $\rho $ the
density of charges $c$ is 
\begin{equ}
c\equiv E/N=2\epsilon+\OO(\epsilon ^2) ~.
\end{equ}
\end{estimate}

\begin{proof}
Assuming equilibrium, by detailed balance, the probability to see a
defect of charge $\pm1$ is 
$\OO(e^{-1\cdot\beta })=\OO(\epsilon )$, while the probability to see higher charges is
$\OO(e^{-2^2\beta })=\OO(\epsilon^4 )$, by the assumption of
equilibrium and the form of the Hamiltonian, since, if $(d_i-5)^2>1$
then it is at least 4.

So it remains to
estimate the coefficient in front of the factor $\epsilon $.
There are 4 cases to consider: The number of red nodes with degree 4
or 6, resp.~the number of blue nodes with degree 6 or 8. All these
cases cost energy 1 per instance, and thus these 4 numbers are equal by
the virial theorem.

We also need to estimate the cases with 0 charge, \ie, blue nodes with
7 neighbors and red nodes with 5 neighbors,
which appear again equally often, by
the virial theorem.
Since there are $N/2$ nodes of each color, and each of the colors has
2 states of defect 1 (namely $\pm1$), we conclude that the expected
total number of defects is
\begin{equ}\label{e:twonepsilon}
  2\cdot 2 \cdot\epsilon  \cdot (N/2) = 2\epsilon N+\OO(\epsilon ^2)~.
\end{equ}

\end{proof}

\subsection{Distribution of the colors}\label{s:colordist}

We next calculate the probabilities that a randomly 
chosen link connects 2 red (blue) nodes. We denote these probabilities
by $p_\rr$ for red-red, $p_\rb$ for red-blue and so on.
If there are no defects, \ie, at order $\epsilon ^0$, all red nodes
have $5$ neighbors and all blue nodes have $7$. This leads to the following relations:
\begin{equa}
 2p_{\rr} + p_{\rb} &= 5/6~, \\
 2p_{\bb} + p_{\rb} &= 7/6~.
\end{equa}
Assuming that the positions of the colors are uncorrelated, we find
that the relative probabilities to find a red-red, resp.~blue-blue pair are
\begin{equ}
 p_{\rr} / p_{\bb} = 25/49~.
\end{equ}
This leads to $p_{\rr} = 25/144$, $p_{\bb} = 49/144$, and $p_{\rb} = 70/144$.
In \fref{f:color-distribution} we show that numerical simulations confirm this simple approximation to a very high degree of fidelity.

\begin{figure}
 \begin{center}
    \epsfig{file=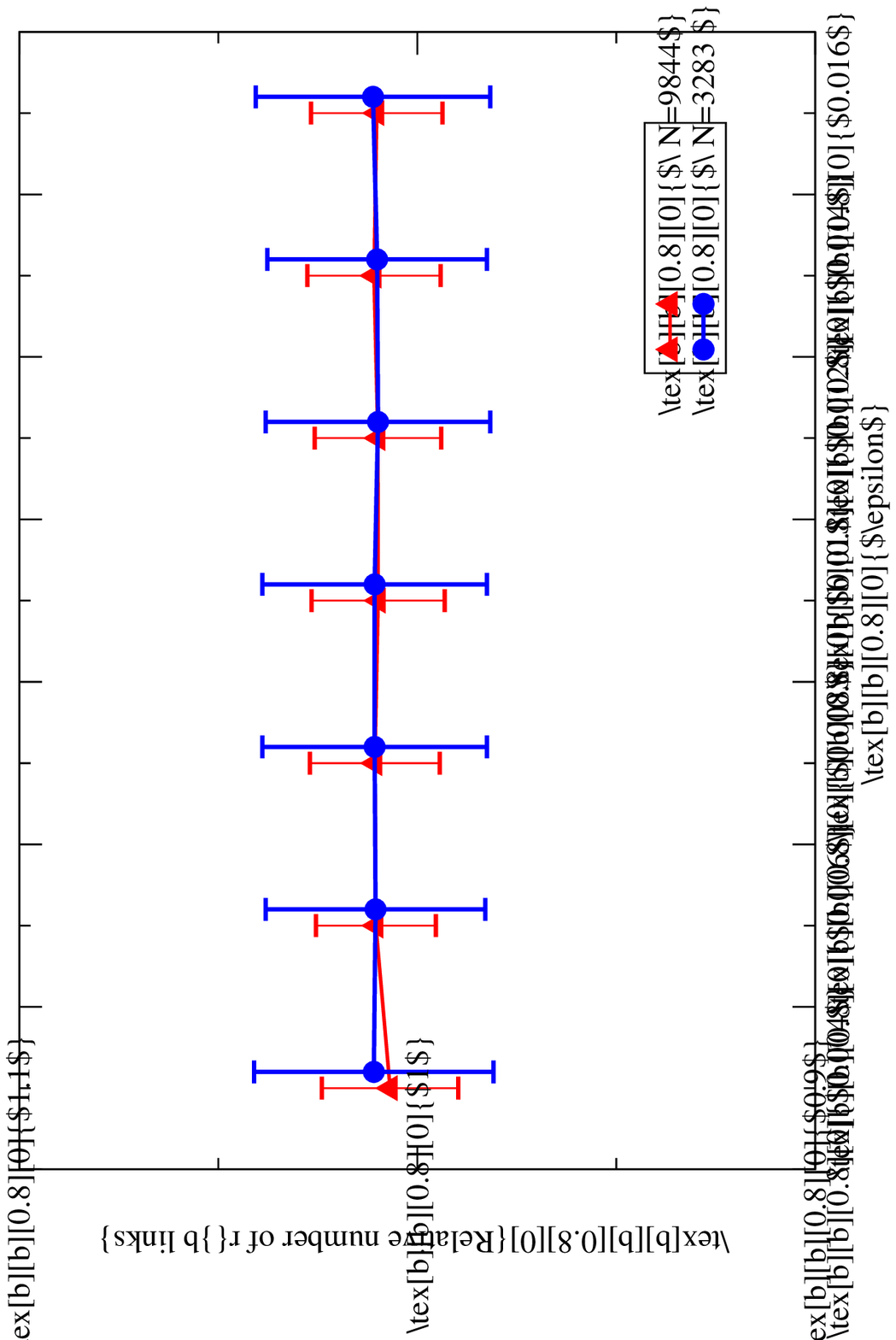,width=0.5\textwidth,angle=270}
  \end{center}  
  \caption{Numerical check of relation $p_{\rb}=70/144$ by plotting
    $p_{\rb}/(70/144)$. The error bars are 3 $\sigma$ and the data for
    $N=3283$ are slightly shifted (in the $x$-direction) for better visibility.}\label{f:color-distribution} 
  
\end{figure}

\subsection{Energy cost of flips}

We adopt an approach similar to \sref{s:colordist}. We use the hypothesis that
the charges are randomly distributed over the nodes to calculate
the probability of finding a link with a given neighborhood of charges
and compare it to simulation results.
In this case however, given a link $\ell$, the neighborhood we consider is the ordered set 
of all 4 nodes involved in its flipping. For example in \fref{f:abcd}, this set would be
$\left( c(A),c(B),c(C),c(D) \right)$ where $c(A)$ is the charge of the node $A$.
This choice will be very useful for to study the dynamics later on since it determines
the energy cost of flipping a given link:
\begin{equs}[e:energy-cost]
  \dE(\ell) &= \sum_{n\in\{A,B\}}\left( c(n)-1 \right)^{2} - \left( c(n) \right)^{2} +  \sum_{n\in\{C,D\}}\left( c(n)+1 \right)^{2} - \left( c(n) \right)^{2} \\
	    &= 4+2\left( c(C)+c(D)-c(A)-c(B) \right)~\text{.}
\end{equs}

It is easy to enumerate all the various cases and the energy cost
associated with each of them.
We restrict the discussion to those situations where the charges take
values in $\{+1,0,-1\}$.
\begin{figure}
  \begin{center}
    \epsfig{file=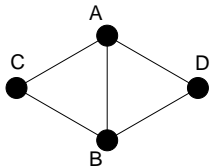,width=0.2\textwidth}
  \end{center}
  \caption{Labeling of the corners of 2 adjacent triangles}\label{f:abcd}
  \end{figure}
In principle, there are $3^4$ configurations, which are reduced
to $36$, by symmetry. They are summarized in Table~\ref{table:1} (symmetrical cases
omitted). 

Note that if the defects of the original configuration are bounded by
$\pm1$, then $\dE$ varies between $-4$ and $12$.
\begin{table}
\tbl{The energy differences obtained by flipping the link between
the first 2 values to a link between the second 2 values, as a
function of the number of defects.}
{\begin{tabular}{c c c c c r}
\toprule
defects & \multicolumn{4}{c}{initial state} & $\dE$\\
\colrule
0& 0 & 0 & 0 & 0 &  4\\
\colrule
1& + & 0 & 0 & 0 &  2\\
1& 0 & 0 & 0 & - &  2\\
1& 0 & - & 0 & 0 &  6\\
1& 0 & 0 & + & 0 &  6\\
\hline
2& + & + & 0 & 0 &  0\\
2& + & 0 & 0 & - &  0\\
2& 0 & 0 & - & - &  0\\
2& + & - & 0 & 0 &  4\\
2& + & 0 & + & 0 &  4\\
2& 0 & - & 0 & - &  4\\
2& 0 & 0 & + & - &  4\\
2& - & - & 0 & 0 &  8\\
2& 0 & - & + & 0 &  8\\
2& 0 & 0 & + & + &  8\\
\botrule
\end{tabular}\quad
\begin{tabular}{c c c c c r}
\toprule
defects & \multicolumn{4}{c}{initial state} & $\dE$\\
\colrule
3& - & - & + & 0 &  10\\
3& 0 & - & + & + &  10\\
3& + & - & 0 & - &  2\\
3& + & + & + & 0 &  2\\
3& + & + & 0 & - &  -2\\
3& + & 0 & - & - &  -2\\
3& + & 0 & + & - &  2\\
3& 0 & - & - & - &  2\\
3& - & - & 0 & - &  6\\
3& + & - & + & 0 &  6\\
3& + & 0 & + & + &  6\\
3& 0 & - & + & - &  6\\
\botrule
\end{tabular}\quad
\begin{tabular}{c c c c c r}
\toprule
defects & \multicolumn{4}{c}{initial state} & $\dE$\\
\colrule
4& + & - & - & - &  0\\
4& + & + & + & - &  0\\
4& - & - & + & + &  12\\
4& - & - & - & - &  4\\
4& + & - & + & - &  4\\
4& + & + & - & - &  -4\\
4& + & + & + & + &  4\\
4& - & - & + & - &  8\\
4& + & - & + & + &  8\\
\botrule
\end{tabular}}\label{table:1}
\end{table}

\subsection{The number of local defect configurations}

We assume throughout that the number of red (blue) nodes is $\nr$
($\nb$) and that $\Delta\equiv \nr-\nb\in\{0,1\}$. We denote by
$p_\pm$ the probabilities to find charges $\pm1$, respectively.
Assuming that there are no other charges (except 0), we can write
\begin{equa}
N\cdot( p_{\tt -}  + p_\+) &= E ~,  \\
N\cdot( p_{\tt -} - p_\+) &= 12 - \Delta~,
\end{equa}
where the second equation follows from the Euler formula.
In equilibrium, $E=2N\epsilon $, by \eref{e:twonepsilon}, and
therefore we get
\begin{equ}\label{e:pplusminus}
 p_{\tt\pm}= \epsilon \mp 6/N \pm \Delta/(2N) + \OO\left( \epsilon^2 \right)~.
\end{equ}
We will assume that $N\epsilon\gg 6$ so that the second term in
\eref{e:pplusminus} can be neglected.
In a similar way, one can show that
\begin{equ}
  p_{\pm2}=\epsilon ^4 +\OO(\epsilon ^5)~,
\end{equ}
and combining these we find that the probability of nodes with charge
$0$ is
\begin{equ}
  p_0= 1-2\epsilon +\OO(\epsilon ^2)~.
\end{equ}
We next consider in more detail what happens in those pairs of
triangles where a flip leads to $\dE=0$. Looking again at \eref{e:energy-cost}
we see that the case $\dE=0$ appears in 3 cases:\hfill\break
Case $q_{\+\m }$: One of A or B has charge \+ and C
or D has charge \m (and the others, charge 0).\hfill\break
Case $q_{\+\+}$:  A and B charge +, C and D charge 0.\hfill\break
Case $q_{\mm }$: A and B charge 0, C and D charge \m\,.\hfill\break

Continuing with the independence assumption, we now look at the
probability to find a configuration of type $q_{\+\+}$, $q_{\+\m }$, and
$q_{\m {}\m }$. Note that there are $6N-12$ half-links emanating from the
nodes, and we are to pair them up randomly. Note that if a site is red, it has $4$, $5$, $6$ outgoing links,
depending on whether its charge is $-$, $0$, $+$, respectively.
Similarly, the numbers for a blue node are $6$, $7$, $8$.
Therefore, given that there are on average $\epsilon N/2$ defects of
type red-$4$, red-$6$, blue-$6$, blue-$8$, there will be
$4\epsilon N/2$ links from the red-$4$, $6\epsilon N/2$ from red-$6$
and blue-$6$, and $8\epsilon N/2$ from blue-$8$.
The blue-$7$ and red-$5$ occur with probability almost 1 and have
therefore
respectively $7 N/2$ and
$5 N/2$ dangling edges (with a correction factor
$1-\OO(\epsilon ))$ which we omit throughout.
The probabilities to see such dangling edges are the quantities above,
divided by $6N-12$, the total number of dangling edges. We get,
omitting higher order terms:
\begin{equa}[e:qpm]
 q_{\+\+} =& \left( 7p_{+}/6 \right)^{2} \cdot p_0^{2}=49\epsilon ^2/36~, \\
 q_{\mm } =& \left( 5p_{-}/6 \right)^{2} \cdot p_0^{2}=25\epsilon ^2/36~, \\
 q_{\+\m } =& 4\left( 5p_{-}/6 \right)\left( 7p_{+}/6 \right) \cdot p_0^{2}=140\epsilon ^2/36~. \\
\end{equa}
We also get, by looking at \tabref{table:1}:
\begin{equa}[e:pdE]
   p_{\dE=0} &= q_{\+\+} + q_{\mm } + q_{\+\m }=214\epsilon ^2/36~,\\
   p_{\dE=2} &= 2\left( 7p_{+}/6 + 5p_{-}/6 \right) \cdot  p_{0}
   ^{3}=4\epsilon  ~,\\
   p_{\dE=4} &= p_0^4=1-\OO(\epsilon )~,
\end{equa}

The discussion of the other values of $\dE$ shows the limitations due
to our closing assumptions: by the virial theorem, in total
independence, we would simply have
\begin{equ}\label{e:pdE2}
p_{\dE=0}=p_{\dE=8}\text{ and }p_{\dE=2}=p_{\dE=6}~.
 \end{equ}
But we could also have computed the probabilities as above, with the result:
\begin{equa}\label{e:pdE3}
 p_{\dE=-2} &= 2\left( 7p_{+}/6 \right)^{2} \cdot \left( 5p_{-}/6 \right) \cdot p_0 + 2\left( 5p_{-}/6 \right)^{2} \cdot \left( 7p_{+}/6 \right) \cdot p_0 \\
	   &\approx 3.89 \epsilon ^{3}
\end{equa}
instead of $4\epsilon^{3}=p_{\dE=2}\cdot\epsilon^{2}$ given by the 
stationarity 
assumption, which proves that the distribution of
defects is not completely uncorrelated. We will say that the
correlation is bounded by $0.1 \epsilon ^3$, and can thus be neglected
in the limit $\epsilon \to0$.

In Figs.~\ref{f:energy} and \ref{f:fitde0} we show with 2 examples that the numerical simulations confirm
these simple approximations to a very high degree of fidelity.
Note that in \cite{Godreche1992}, the
\emph{uniform} measure on $\TT_N$ was considered, and even this leads to
correlations of degrees of neighboring nodes.

\begin{figure}
  \begin{center}
    \psfig{file=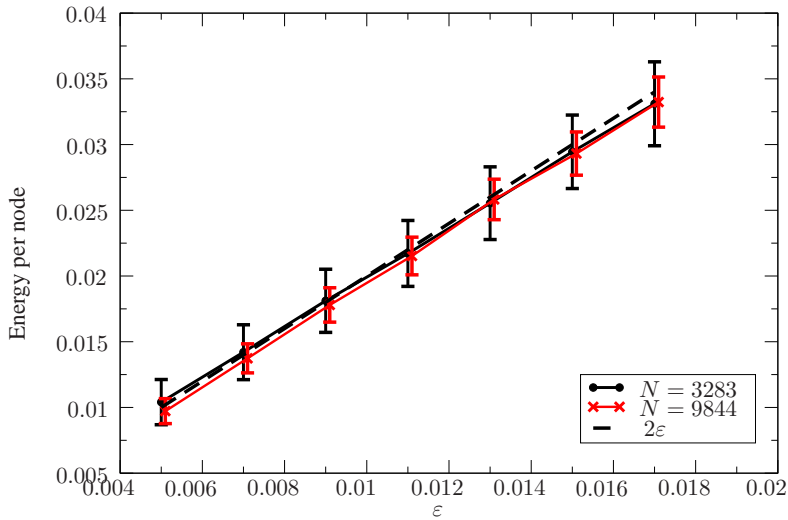,width=0.5\textwidth,angle=270}
  \end{center}
  \caption{Numerical test of the mean energy  per node (Estimate
    \ref{est:1}) for 950 realizations.
    The data for $N=9844$ are
    slightly shifted for better visibility. Note the excellent fit
    with the theoretical curve, although the fluctuations are huge,
    getting better with larger system size (1 standard
    deviation shown).}\label{f:energy}
\end{figure}
\begin{figure}
  \begin{center}
    \psfig{file=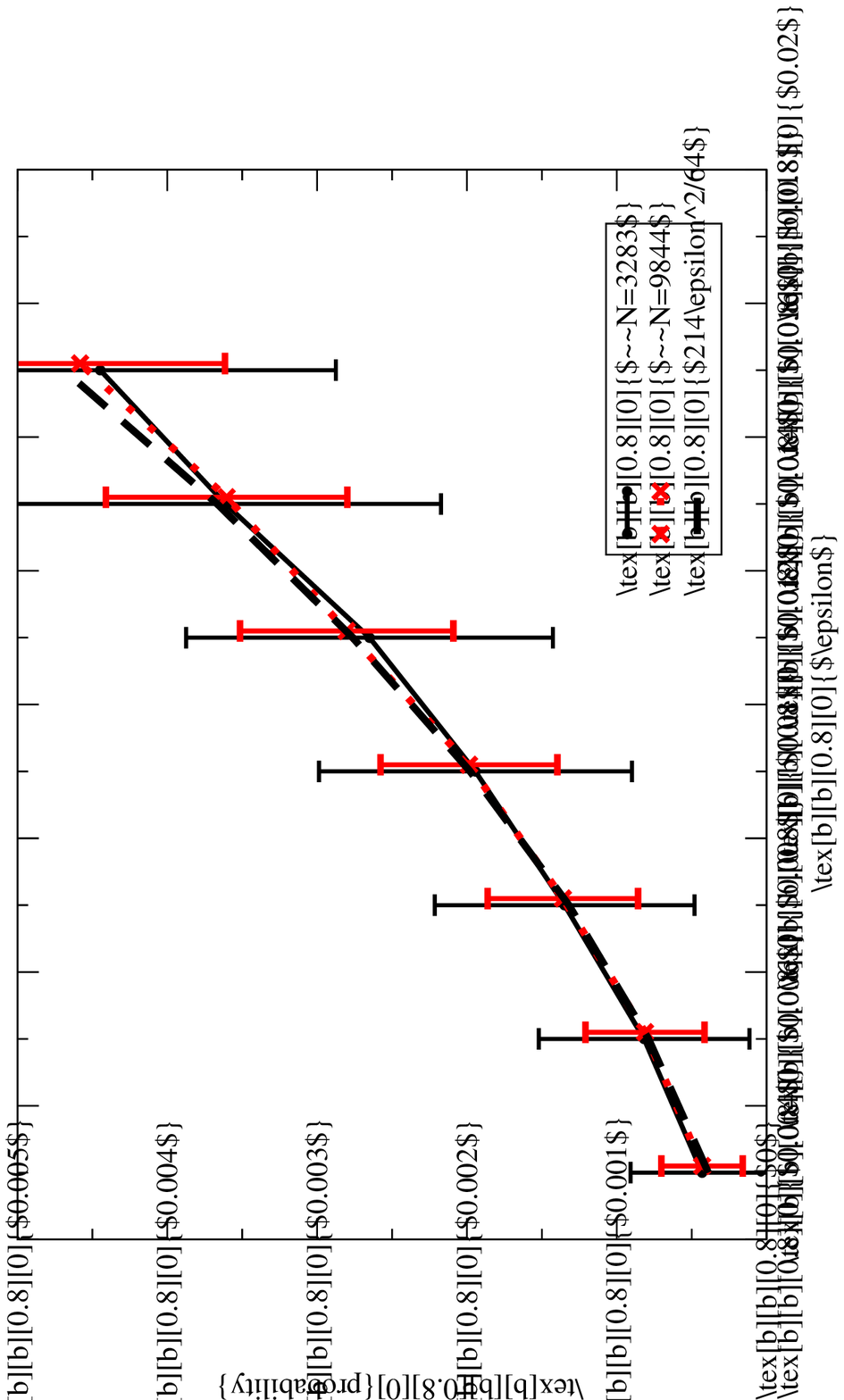,width=0.5\textwidth,angle=270}
  \end{center}
  \caption{Numerical test of $p_{\dE=0}$ (\eref{e:pdE})  from 950 realizations. The data for $N=9844$ are
    slightly shifted for better visibility. Note the excellent fit
    with the theoretical curve, although the fluctuations are huge,
    getting better with larger system size (1 standard
    deviation shown).}\label{f:fitde0}
\end{figure}

\section{Dynamics of the System (at Equilibrium)}
\label{s:dynamics-eq}
We can use the results of the previous section to
estimate the dynamics of the system under the Metropolis algorithm.

If a flip leads to an energy change $\dE$ then it is accepted in the
Metropolis algorithm with probability
\begin{equ}\label{e:accept}
  p_{\rm{acceptance}}= \epsilon ^{\max(0,\,\dE)}~.
\end{equ}
On the other hand, the probabilities to pick a link with fixed
$\dE$ are given by \eref{e:pdE} and \eref{e:pdE2}.
Multiplying these numbers with the probabilities of \eref{e:accept}
leads to an estimate of the probability that the flip in question actually happens.
The results are summarized in \tabref{table:2} (calculated this time
with the method of \eref{e:pdE3}).

\begin{table}
\tbl{The probabilities of transitions from the initial state. Data
only shown to order $\epsilon ^4$. The third column shows the
probabilities to pick a link leading to a given $\dE$. Higher order
corrections are omitted.}
{\begin{tabular}{r r r}
\toprule
$\dE$ & \qquad\qquad transition rate & \qquad \qquad local landscape\\
\colrule
-4 & $ 1225/1296\cdot \epsilon^{4}$ &  $ 1225/1296\cdot \epsilon^{4}$ \\
-2 & $ 35/9\cdot \epsilon^{3}$ &  $ 35/9\cdot \epsilon^{3}$ \\
0 & $ 107/18\cdot \epsilon^{2}$ &  $ 107/18\cdot \epsilon^{2}$ \\
2 & $ 4\cdot \epsilon^{3}$ &  $ 4\cdot \epsilon^{1}$ \\
4 & $ 1\cdot \epsilon^{4}$ &  $ 1\cdot \epsilon^{0}$ \\
6 & $ 4\cdot \epsilon^{7}$ &  $ 4\cdot \epsilon^{1}$ \\
8 & $ 107/18\cdot \epsilon^{10}$ &  $ 107/18\cdot \epsilon^{2}$ \\
10 & $ 35/9\cdot \epsilon^{13}$ &  $ 35/9\cdot \epsilon^{3}$ \\
12 & $ 1225/1296\cdot \epsilon^{16}$ &  $ 1225/1296\cdot \epsilon^{4}$\\
\botrule
\end{tabular}}
\label{table:2}
\end{table}
{\bf Discussion}: Inspection of \tabref{table:2} shows that the 
events with the highest transition rate are those which cost no energy, followed by those which 
have an energy cost of $\pm 2$.
Also note that the
probability to find a link which will lead to a given $\dE$ is equal
to the quantity in the table times $\epsilon ^{-\max(0,\,\dE)}$ since
then we neglect the Metropolis factor. This leads to a table with the
same prefactors, but with a power $\epsilon ^{|\dE-4|/2}$.
In particular, \emph{in the steady state, the local landscape is given
by the 3rd column of Table~\ref{table:2}: It is symmetric around $\dE=4$.}\newline

Henceforth, we will only consider the 3 most frequent types of flips
(the others are an order $\epsilon $ less probable):
\begin{enumerate}
 \item Flips which change from 1 defect to 3 of them
and which raise the energy by 2. 
 These flips will be called \emph{creation events}.
 \item Flips which start from 3 defects and end with 1 defect and which decrease the energy by 2.
 These flips will be called \emph{annihilation events}. Creation and annihilation
 events are obviously dual to each other and equiprobable
 in the stationary state.
 \item Flips which do not change the energy, and in which a pair \+\+,
  \+\m, or \mm is involved.
 These flips are by far the most probable. 
 We will discuss below in more details
 the 3 configurations which lead to $\dE=0$.
\end{enumerate}

\subsection{The most probable flips}

As stated above, if $\epsilon=1\%$, then over $99\%$ of the flips
(which are accepted by the Metropolis algorithm)
do not change the energy. It is clear that, in order to understand
the dynamics of the system, one should start by studying these flips.

Looking at \tabref{table:1} we see that there are 3 candidates for
$\dE=0$ and they all involve only 2 defects. We will now show that the cases
of \+\+\0\0 and \0\0\mm are quite different from that of \+\0\0\m
(and its 3 other variants \+\0\texttt{-}\0,\,\,$\dots$\,). In the first case, \+\+\0\0, which is 
similar to the case \0\0\mm, the local
neighborhood looks like in \fref{f:backandforth}.
\begin{figure}
\begin{center}
\psfig{file=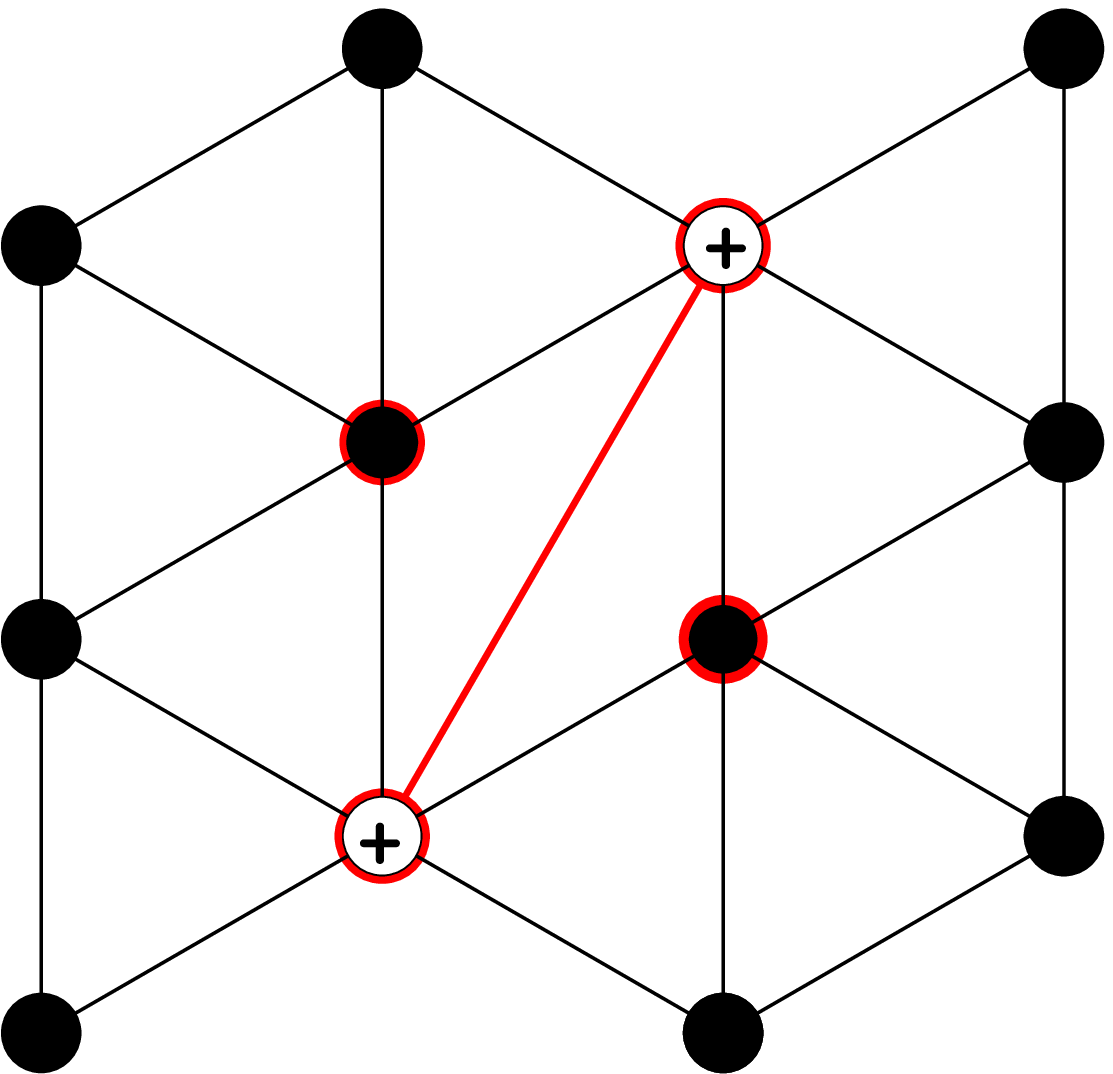,width=0.3\textwidth}\qquad\psfig{file=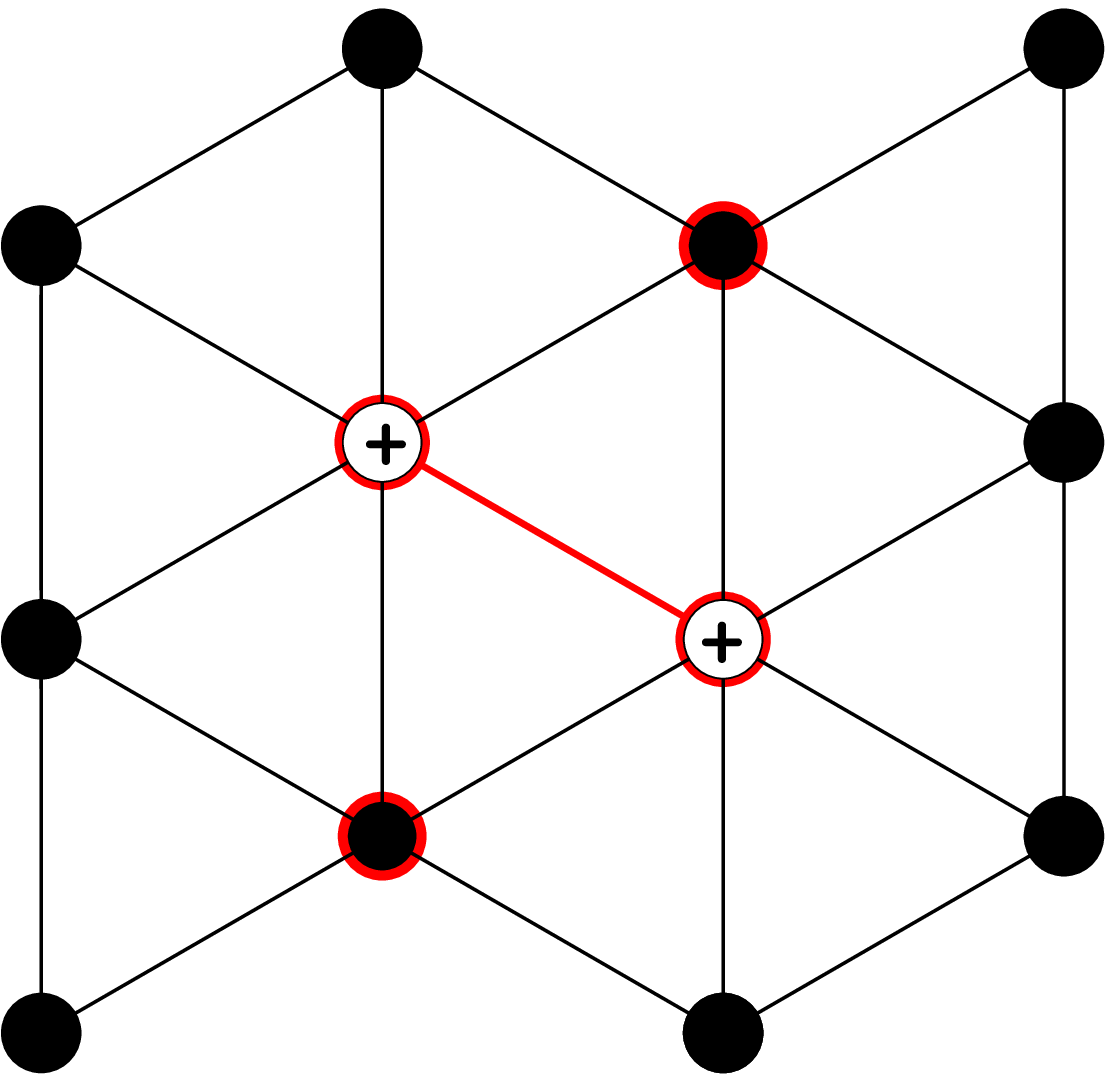,width=0.3\textwidth}
\end{center}
  \caption{A flip from \+\+\0\0 (on the left) and the resulting
    triangulation on the right. The affected nodes are supposed to be
    red, in this example. Note that the result is \emph {again} of the
    type \+\+\0\0. Furthermore, \emph{again} with $\dE=0$ one can flip
    back. This is reminiscent of ``blinkers'' in the game of life \cite[Chap25]{BCG}.}\label{f:backandforth}
\end{figure}
In this case, what happens is a flipping back and forth between the 2
states, with probability $p=(3N-6)^{-1}$ (the probability to choose the
colored link).

The case \+\0\0\m  is illustrated in \fref{f:goaway}. Here a new, and
important phenomenon appears: The pattern, \+\0\0\m which we will call a \emph{pair}, is recreated, but
\emph{at a new position} a distance 1 away from the old one. We will
also say that the pair \+\m \emph{walks} one step.
\begin{figure}
\psfig{file=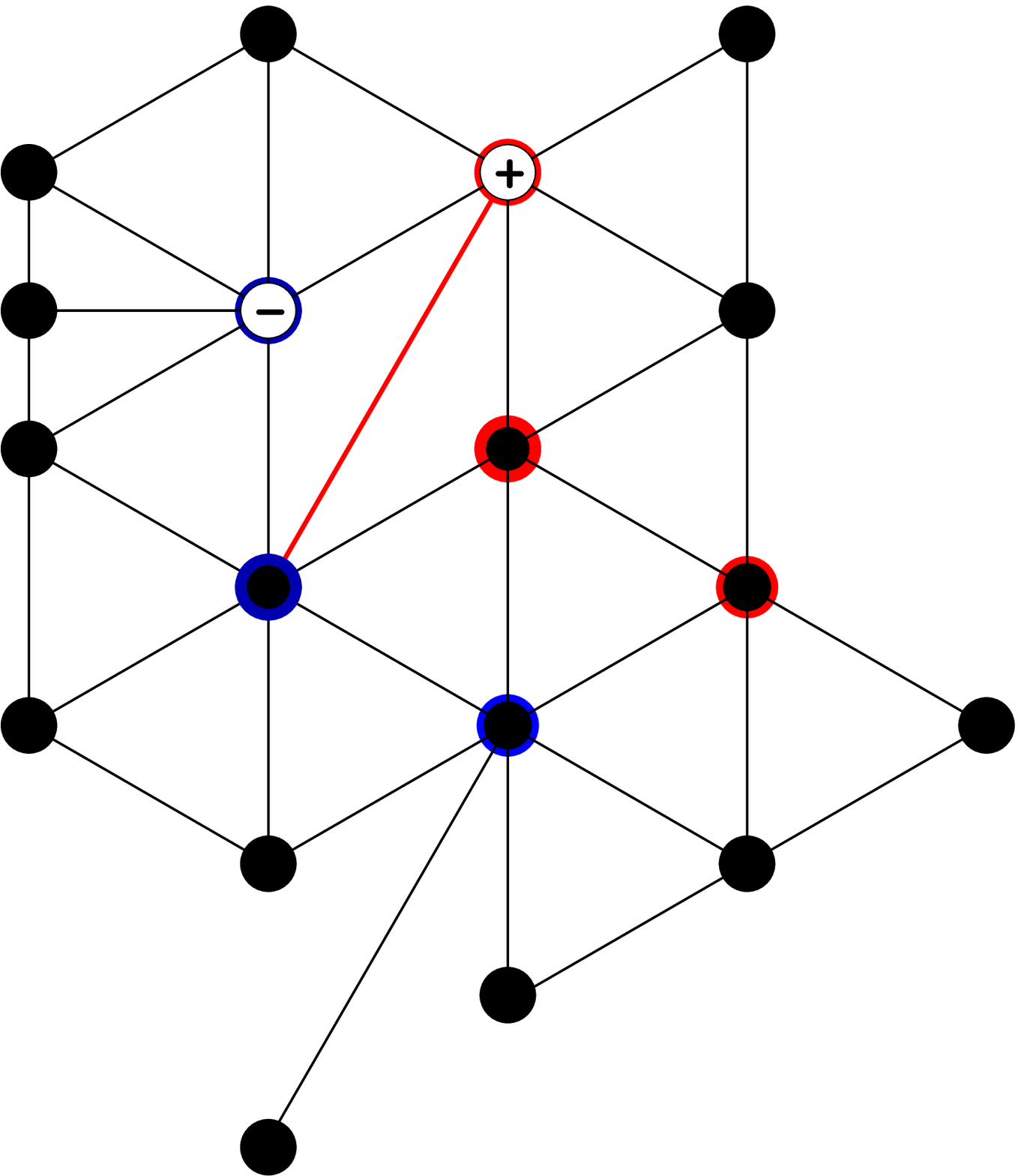,width=0.3\textwidth}\qquad\psfig{file=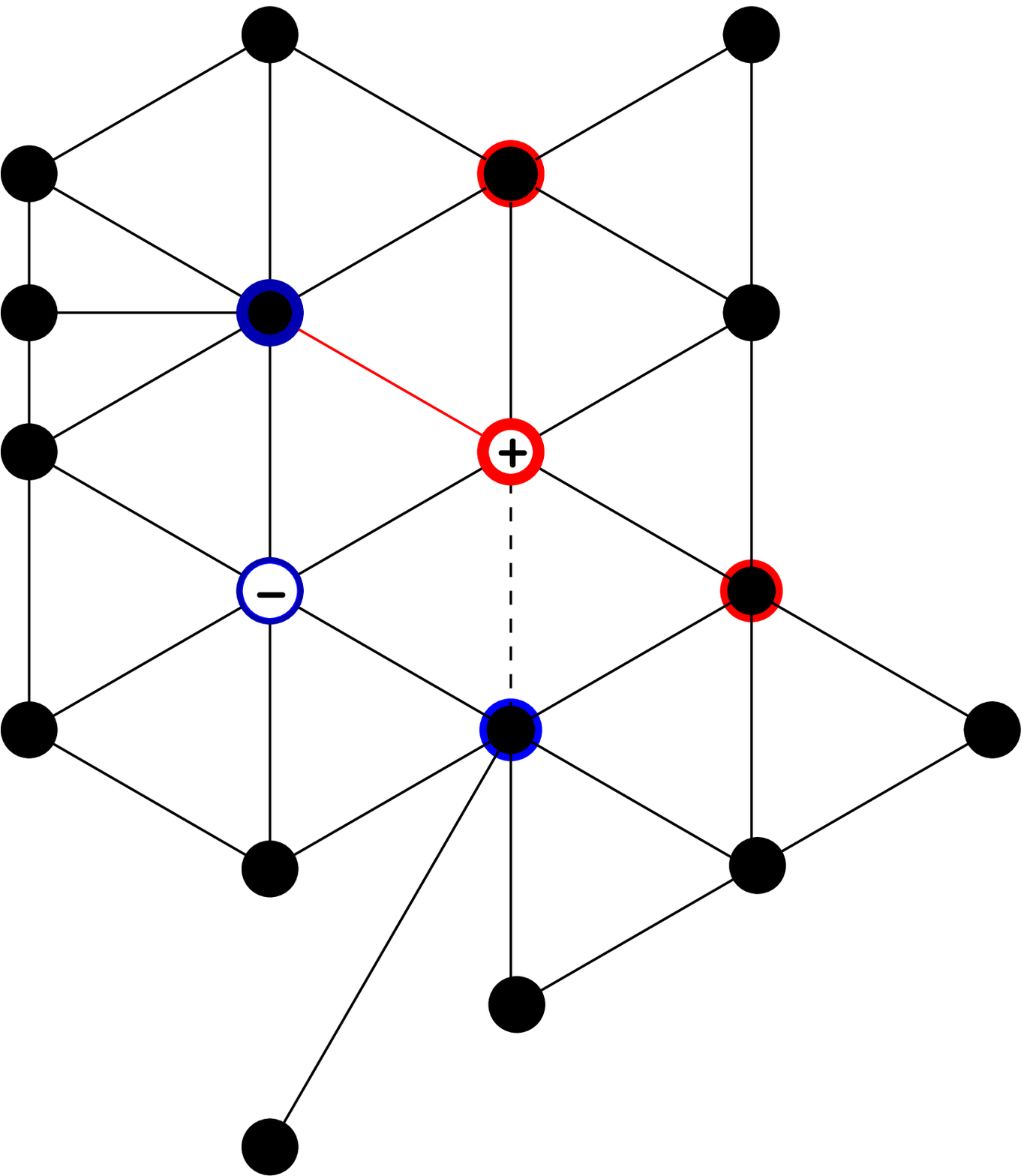,width=0.3\textwidth}\qquad\psfig{file=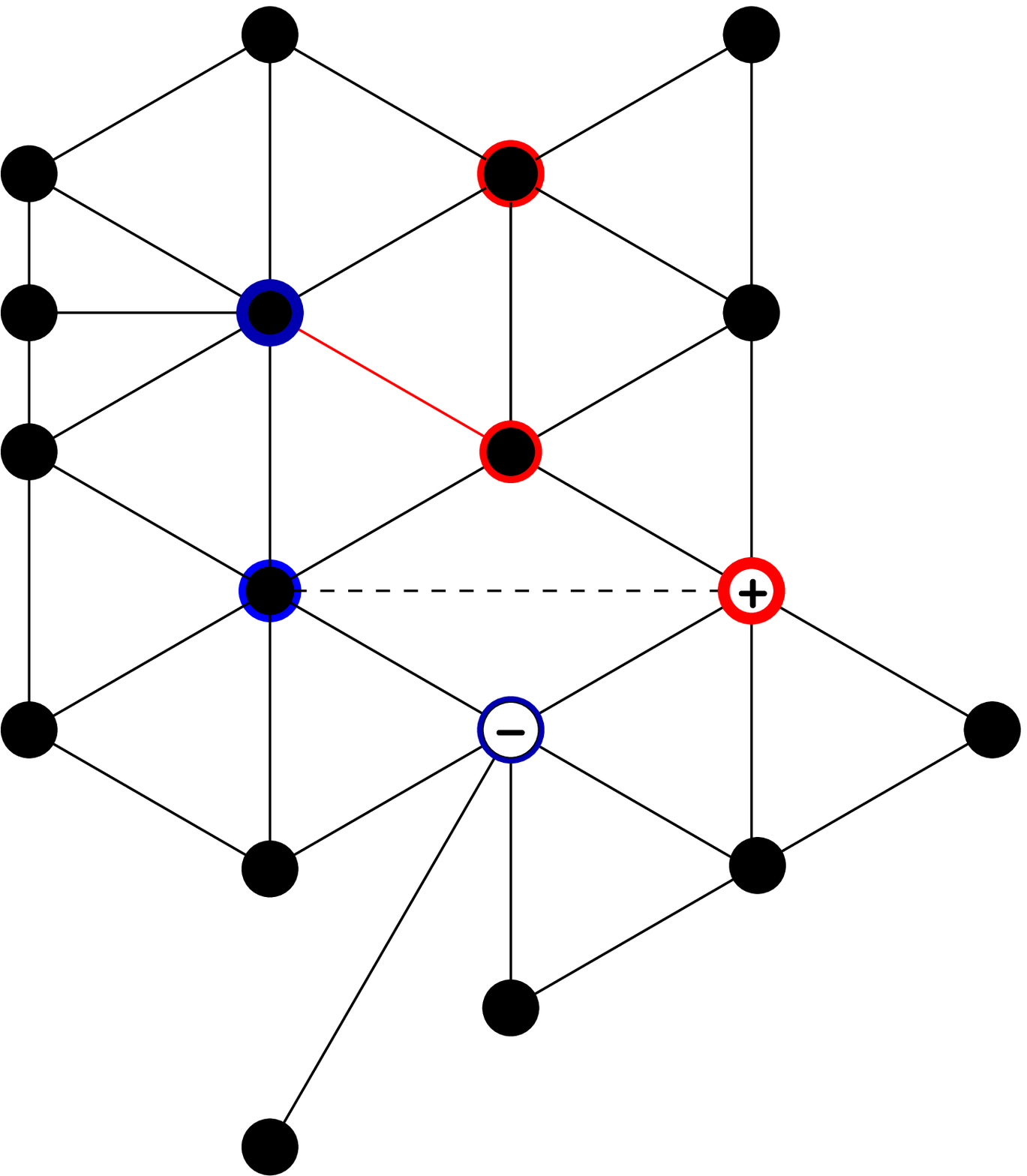,width=0.3\textwidth}
  \caption{Change of pattern in the case \+\0\0\m . In the sample only the
    relevant colors are as shown. Note that the effect of the flip is
    that the 2 defects move (in the picture) \emph{down}. The
    reverse flip costs nothing $\dE=0$. The second flip (dashed line)
  moves the defects one step further. Note that this motion must take
  place one a \emph{predefined, 1-dimensional path}.}\label{f:goaway}
\end{figure}

The more important observation is that the pairs of defects must walk
along a \emph{predefined, 1-dimensional path} as shown in
\fref{f:path}. This means that
\emph{the $dE=0$ motion of \+\m pairs is a {\bf{one}}-dimensional
  random walk in the current triangulation $T$. This random walk
  (flipping back and forth on the predefined path) will continue until
  some other type of event happens.}

\begin{figure}
\begin{center}
\psfig{file=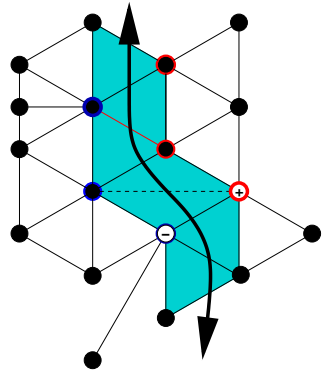,width=0.3\textwidth}
\end{center}
  \caption{The same configuration as in \fref{f:goaway} with the \emph{1-dimensional}
     path of the pair of defects superimposed.}\label{f:path}
\end{figure}

\subsection{Lifetime of pairs}\label{s:lifetime}
As we have seen, a pair of opposite charges \+\m   can move through the
system without energy cost. Its motion is a 1D random walk along a
fixed 1-dimensional path. Edges are still chosen randomly and will be
flipped if possible and if the Metropolis condition is met in case
$\dE>0$. Here we ask about the relative probabilities that a pair
disappears, and we will show that \emph{predominantly a pair will die when
it collides with a defect}.

We need to compare 3 possibilities of which the first will be seen to
be the most probable:
\begin{enumerate}
  \item The random walk reaches another defect.
  \item The pair is destroyed because a creation event involving 1 or
    2 of its 2 
  defects occurs. 
  \item Two independent random walks meet.
\end{enumerate}

Our earlier discussion says that the concentration of pairs $\left( 70\epsilon^{2}/36 \right)$
is much smaller than the concentration of defects $\left( 2\epsilon \right)$, implying
that the probability of 2 pairs meeting is insignificant when compared 
to the probability of a pair meeting a defect.

We next estimate the probability of destroying a pair as in case (2).
On average, there are
7 links in the neighborhood of a given pair which increase $E$, and flipping such a link
has an energy cost of $2$.
The probability of this to happen
is $7\epsilon^{2}/(3N)$.
Since the pair moves every $\OO(N)$ attempted flips, we conclude that, on average, a pair will do
$\OO\left( \epsilon^{-2} \right)$
steps before it is destroyed as in case (2).

The number of steps needed for case (1) to happen depends obviously on the
density of defects.
We let $\xi$ denote the average distance between defects
(counted in number of links).
Since the number of defects is $2\epsilon N$ and the system is
2-dimensional,  we conclude that
\begin{equ}
  \xi =\OO(\epsilon ^{-1/2})~.
\end{equ}

As long as the pair is not destroyed by the mechanism leading to
case (2) it can thus do $\OO(\epsilon ^{-2})$ steps by which
time it can visit $\OO(\epsilon ^{-1})$ defects.

This terminates the comparison of the 3 cases, and shows
that a pair has the time to visit a very large number of 
defects before it is destroyed by the 2 other mechanisms.

\begin{figure}
  \begin{center}
    \epsfig{file=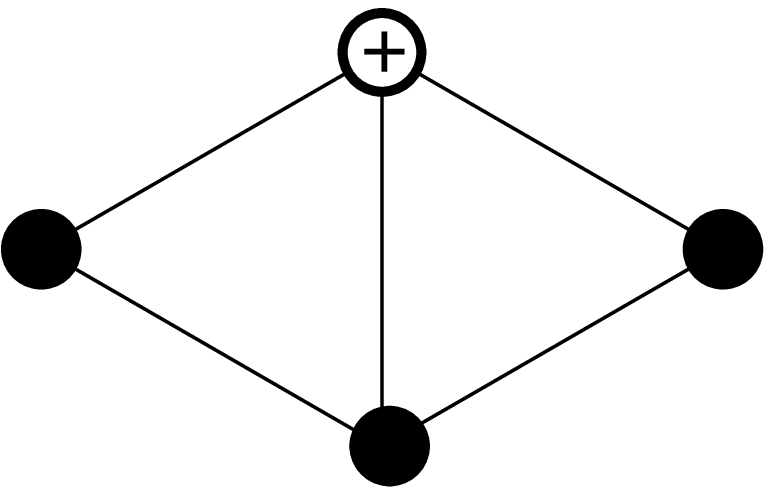,width=0.2\textwidth}\qquad
    \epsfig{file=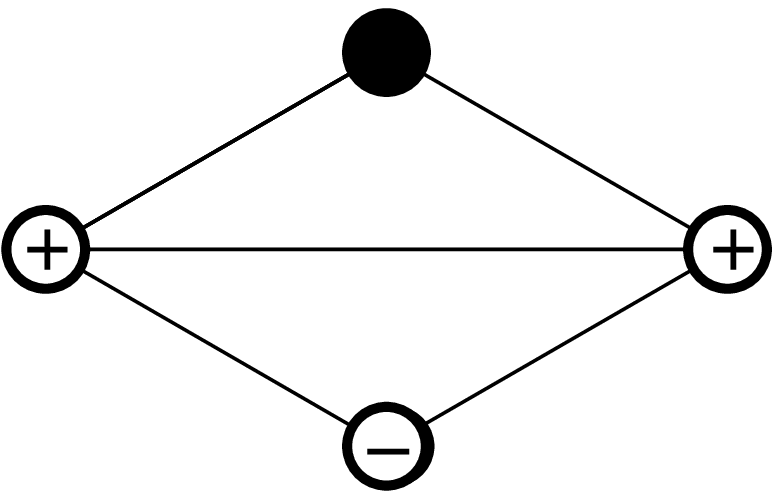,width=0.2\textwidth}
  \end{center}
  \caption{A creation event: a \+\m pair is created from a \+ defect, which is pushed one step.}\label{f:create}
\end{figure}

\section{The Geometry of Pair-Defect Collisions}\label{s:pdcollisions}

In this section we consider the collisions between a pair and a
defect. The discussion is really in two parts: On one hand, we must
consider the probability that a collision between a pair and a defect
is initiated. This depends on the density of the defects, and hence on
$\epsilon $. But, once a collision is initiated, we can ask what the
effect of the collision is going to be. The next proposition shows
that this effect is purely geometrical and independent of $\epsilon $.

\begin{proposition}\label{p:pairdefect}
There are 9 topologically different possibilities $Q_i, i=1,\dots,9$ for a collision to
be initiated. For each of them,
  there are 2 purely geometrical constants $P_{{\rm m},i}>0$ and $P_{{\rm
    d},i}>0$ (depending on $i$) which
  tell us the probability that a collision leads to a move ($P_{{\rm
    m},i}$) of a defect (by 1 or 2 links) resp.~the deletion of the pair
  ($P_{{\rm d},i}$).
\end{proposition}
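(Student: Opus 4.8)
The plan is to prove \pref{p:pairdefect} by a finite topological enumeration of collision types followed, for each type, by the computation of a local absorption probability. By \sref{s:lifetime}, a collision is initiated when the predefined $1$-dimensional path of a walking \+\m pair arrives at a pair of triangles one of whose four corners carries a charge $\pm1$ instead of \0. The combinatorial datum of such an encounter consists of (a) the sign of the target defect, (b) the sign of the pair-charge that reaches it first, and (c) the discrete way in which the $1$-dimensional path is incident on the triangles surrounding the target. Since \eref{e:energy-cost} shows that the flip energetics depend only on charges, the asymptotic color-independence from the opening Proposition on the stationary state lets me ignore colors; together with the reflection and rotation symmetries of the local triangulated patch this reduces the a priori long list of incidences to inequivalent representatives. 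The first assertion of the proposition is precisely that this reduction leaves nine classes $Q_1,\dots,Q_9$, and proving completeness (every initiated collision lies in some class) together with irredundancy (distinct classes are not symmetry-related) is the combinatorial core of the argument.

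For the quantitative claim I would use the separation of scales already exploited in \tabref{table:2}. In the collision patch the admissible flips and their costs are read off from \eref{e:energy-cost} and \tabref{table:1}. By \eref{e:accept} every flip with $\dE\le0$ is accepted with probability $1$, whereas a flip with $\dE>0$ is suppressed by a factor $\epsilon^{\dE}\le\epsilon^{2}$. Hence, to leading order in $\epsilon$, the collision resolves entirely through the $\dE\le0$ sub-dynamics, in which links are still chosen uniformly at random. For each $Q_i$ this defines a finite Markov chain on the local configurations reachable by $\dE\le0$ flips, possessing two disjoint absorbing sets: the event that the target defect has been displaced by one or two links while the \+\m pair survives (a \emph{move}, realized through $\dE=0$ flips), and the event that the pair has been annihilated against the target (a \emph{deletion}, a single $\dE=-2$ flip). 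I would define $P_{{\rm m},i}$ and $P_{{\rm d},i}$ as the absorption probabilities of this chain into the two sets. Being obtained by counting $\dE\le0$ flips in a fixed finite patch, they are rational numbers \emph{independent of} $\epsilon$, which is exactly the purely geometrical character asserted; one checks in addition that the two events exhaust the resolutions, so that $P_{{\rm m},i}+P_{{\rm d},i}=1+\OO(\epsilon)$, the correction coming from the rare $\dE=+2$ creation flips that may intervene before absorption.

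Positivity, $P_{{\rm m},i}>0$ and $P_{{\rm d},i}>0$, would then follow by verifying, on each of the nine patches, that both absorbing sets are reachable from the initial configuration through $\dE\le0$ flips --- a finite check carried out exactly in the spirit of the ``blinker'' and ``walk'' analyses of \fref{f:backandforth} and \fref{f:goaway}.

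The step I expect to be the main obstacle is the enumeration itself: establishing that the symmetry reduction yields \emph{precisely} nine inequivalent collision types, with neither omissions nor spurious duplicates, and then confirming for each type that the $\dE\le0$ chain genuinely admits both absorbing outcomes so that both constants are strictly positive. The absorption-probability computation is mechanical once a chain is written down; the delicate part is the topological bookkeeping of how the fixed $1$-dimensional path can meet a defect, and the proof that no further classes arise.
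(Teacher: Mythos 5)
Your overall strategy --- a finite, symmetry-reduced enumeration of collision types followed, for each type, by the analysis of a finite Markov chain of $\dE\le0$ flips on a fixed local patch, whose transition probabilities are uniform over links and hence $\epsilon$-independent --- is the paper's strategy, and your observation that $\dE>0$ flips are suppressed by at least $\epsilon^{2}$ and can be dropped at leading order is exactly how the paper justifies restricting to the energy-non-increasing sub-dynamics. However, two points do not survive contact with the actual local analysis. First, your enumeration principle is keyed to ``how the 1-dimensional path arrives at the defect'', whereas the paper's count of $9$ comes from a different, more operative criterion: one places the third defect either in the set $\mathcal{U}_1$ of charge-$0$ sites at distance $1$ from either member of the pair (for a \+ defect, which can interact with any defect at distance $1$) or in the set $\mathcal{U}_2$ of sites opposite the pair's \m defect (for a \m defect, since two \m defects interact only when they sit on opposite corners of two adjacent triangles), taking into account that \+ defects have degree $6$ or $8$ and \m defects degree $4$ or $6$. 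A collision can be initiated by a flip that does not lie on the pair's walking path at all (\eg the direct annihilation flip of \fref{f:collision-t1}), so an enumeration based solely on path incidence risks omissions and a different count.

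Second, and more substantively, your Markov chain has the wrong absorbing structure. You posit two disjoint absorbing sets, ``move'' and ``deletion'', with $P_{{\rm m},i}+P_{{\rm d},i}=1+\OO(\epsilon)$. In the paper's first worked example the three equally likely resolutions are: the pair exits the way it came and the defect is unmoved; the pair exits the other way and the defect moves two steps; the pair is annihilated and the defect moves one step. Hence $P_{{\rm m},i}=2/3$ and $P_{{\rm d},i}=1/3$ are probabilities of \emph{overlapping, non-exhaustive} events (annihilation also moves the defect, and the ``nothing happens'' exit has positive probability in all nine cases), so they are not absorption probabilities into complementary absorbing sets and need not sum to $1$. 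Relatedly, your closing positivity check cannot succeed as stated: by the paper's own tally, annihilation is possible in only $2$ of the $9$ configurations and a push in only $8$, so verifying that ``both absorbing sets are reachable on each of the nine patches'' would fail for most $i$. The correct absorbing events are ``the pair exits through one of the dangling links of the state diagram'' versus ``an annihilation flip occurs'', with $P_{{\rm m},i}$ and $P_{{\rm d},i}$ read off from the exit distribution of that chain (as in the figure-eight diagram of \fref{f:collision-d4}, where the defect moves precisely when the entrance and exit links differ in color) rather than identified with the absorbing sets themselves.
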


The remainder of the section deals with the proof of \pref{p:pairdefect}.

\subsection{Definition}

We will study in detail how collisions move defects.
First of all, we will define what we mean by a collision.
Assuming that the density of defects is very small, the only
collisions we will consider are those involving $3$ defects, namely,
the pair \+\m which will collide with a defect \+ or \m.
\begin{definition}\label{def:collision}
 Consider some configuration $T$. Three defects $D_i, i=1,2,3$ of $T$ are said to be \emph{in a collision}
if there are $k\ge2$ flips ($k$ links of $T$) that do not increase the energy such that
\begin{enumerate}
 \item The only defects involved in these $k$ flips are $D_i,i=1,2,3$.
 \item All $3$ defects are involved in these $k$ flips.
 \item At least one of these $k$ flips will move a pair (the others might be any of the 4 cases which do not increase the energy).
\end{enumerate}
\end{definition}

\subsection{Collision types}
In this section, we will describe all possible configurations of a collision and we will show 
that the probability of any such configuration depends solely on the topology (and not on the temperature).
 
The third condition of the definition of a collision states that we can always identify a pair; as a result, the set of all possible configurations of a 
collision can be obtained by taking a pair and placing either a \m defect or a \+  defect in any
position where it can interact with one of the pair's 2 defects.
As seen in the previous section, a \+  defect can interact with any defect if and only if both defects are at distance one. Two \m defects
can interact if and only if they are on opposite corners of 2 adjacent
triangles.
The last ingredient is that \+ defects can have a degree of 6 or 8 whereas \m defects have a degree of 4 or 6.
This yields a systematic method of constructing all possible configurations of a collision:
consider a pair and let $\mathcal{U}_1$ be the set of all empty sites
(charge $0$) which are at distance 1 of any of the pair's 2 defects and $\mathcal{U}_2$
be the set of all empty sites which are opposite to the \m defect of the pair.
The set of all possible configurations of a collision is obtained by placing a \+  defect in any of $\mathcal{U}_1$'s sites or a 
\m defect in any of $\mathcal{U}_2$'s sites, as shown in \fref{f:all-collision-types} in the case where the \+ defect is red and 
the \m defect is blue.
All in all, we get 9 different configurations of a pair and a defect (symmetrical case omitted).
\begin{figure}
  \begin{center}
    \epsfig{file=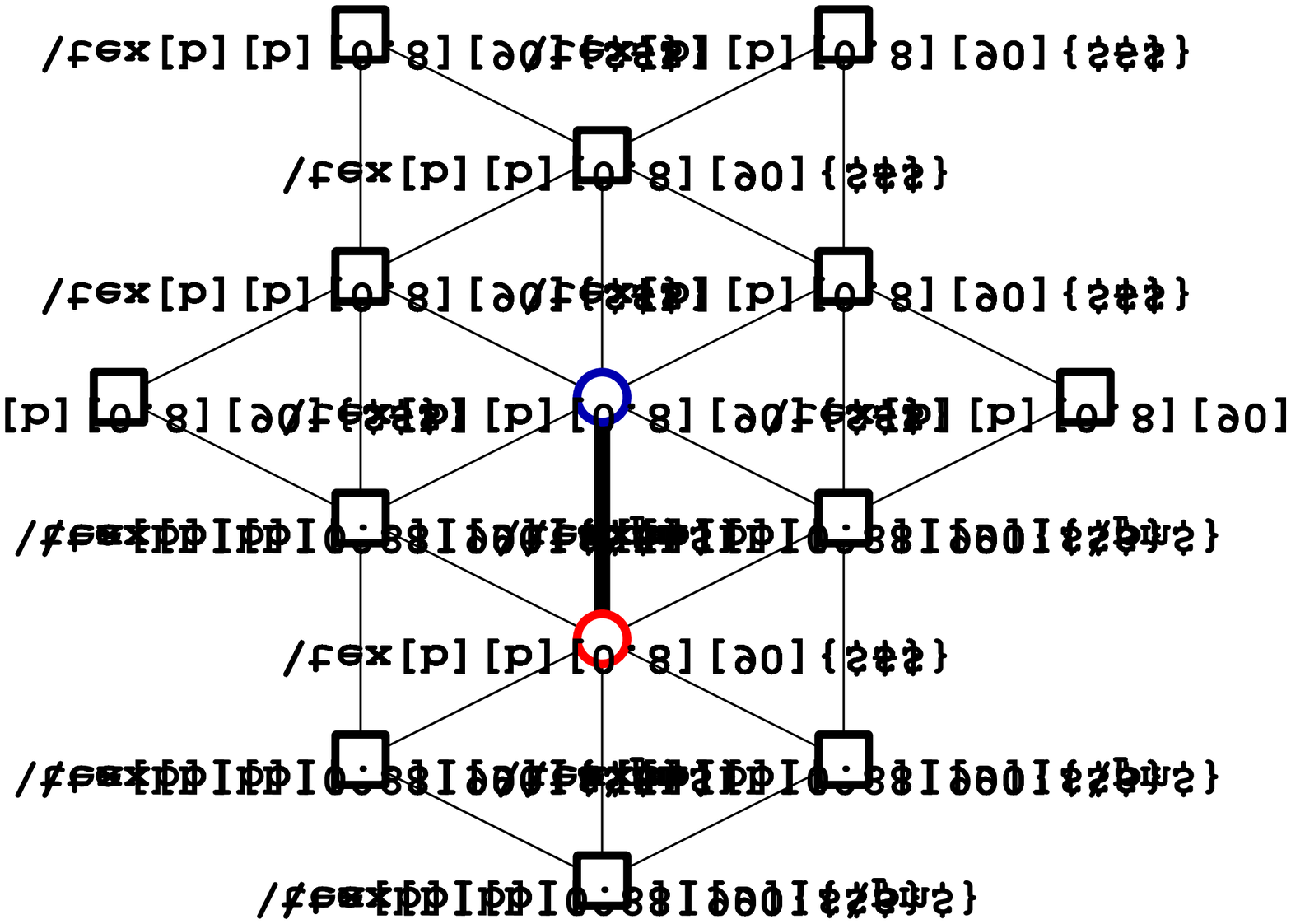,width=0.5\textwidth,angle=270}
  \end{center}
  \caption{A figure showing all possible relative positions of a pair
    and a defect in collision. The pair is shown as a solid black line.}\label{f:all-collision-types}
\end{figure}

Assuming that the defects are randomly distributed, it is clear from \fref{f:all-collision-types}
that the probability of a collision to be of some type $Q_i\in\{1,\dots,9\}$
is a temperature independent constant that can be calculated.
To prove \pref{p:pairdefect}
one must study in detail each of the 9 cases. We will study in particular:
\begin{itemize}
 \item What are the possible outcomes of each collision type and what
   is the (conditional on having initiated the collision $Q_i$) probability of each outcome?
 \item What is the probability (conditional on having initiated the collision $Q_i$) that a pair pushes a defect?
\end{itemize}
We can summarize the answers as follows:
\begin{itemize}
 \item There will always be a defect left over at the end of the collision.
 \item Finding a pair and a defect at the end of the collision is possible in all 9 cases.
 \item An annihilation of the pair is possible in 2 of the 9 cases.
 \item It is possible that the defect is pushed in 8 cases. A defect can be pushed by more than 1 step.
 \item It is possible that the defect remains in its initial position in all 9 cases.
\end{itemize}
The relative probabilities of any of the above  outcomes only depend on the
local geometry. While all the cases have been worked out in detail, we
illustrate the discussion for just 2 of them, and this will complete
the proof of \pref{p:pairdefect}.

\subsubsection{Example 1: A possible annihilation}
There are 2 cases where an annihilation might occur. We consider here
the case of \fref{f:collision-t1}. A \+\m pair collides with a \m
defect. For simplicity, assume that the \+\m pair came from the left.
Once the pair and the defect are in collision, there are 3 links whose
flipping leads to $\dE\le0$. Two of these links (the green ones)
allow the pair to walk away from (or enter) the collision. Flipping the red link on the other
hand causes an annihilation: the pair is destroyed and the defect is pushed by one step.
We clearly see that there are 3 possible outcomes:
\begin{itemize}
 \item The pair exits the collision through the same way it entered
   (in our case, on the left). The defect remains in its initial position.
 \item The pair exits the collision through the other green link. The defect moves 2 steps.
 \item An annihilation event occurs. The pair disappears and the defect moves 1 step.
\end{itemize}
The (conditional) probability of each outcome
is 1/3 and the (conditional) probability that the defect will have moved at the end of the collision
is 2/3.

\begin{figure}
  \begin{center}
    \epsfig{file=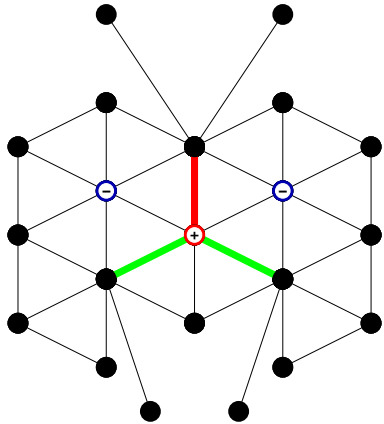,width=0.5\textwidth}
  \end{center}
  \caption{A collision where a an annihilation is possible.
  The green links show the way the pair enters (or exits) the collision.
  Flipping the red link will cause an annihilation event.}\label{f:collision-t1}
\end{figure}

\subsubsection{Example 2: A bifurcation}
Here, we look at the collision case of \fref{f:collision-t4}.
\begin{figure}
  \begin{center}
    \epsfig{file=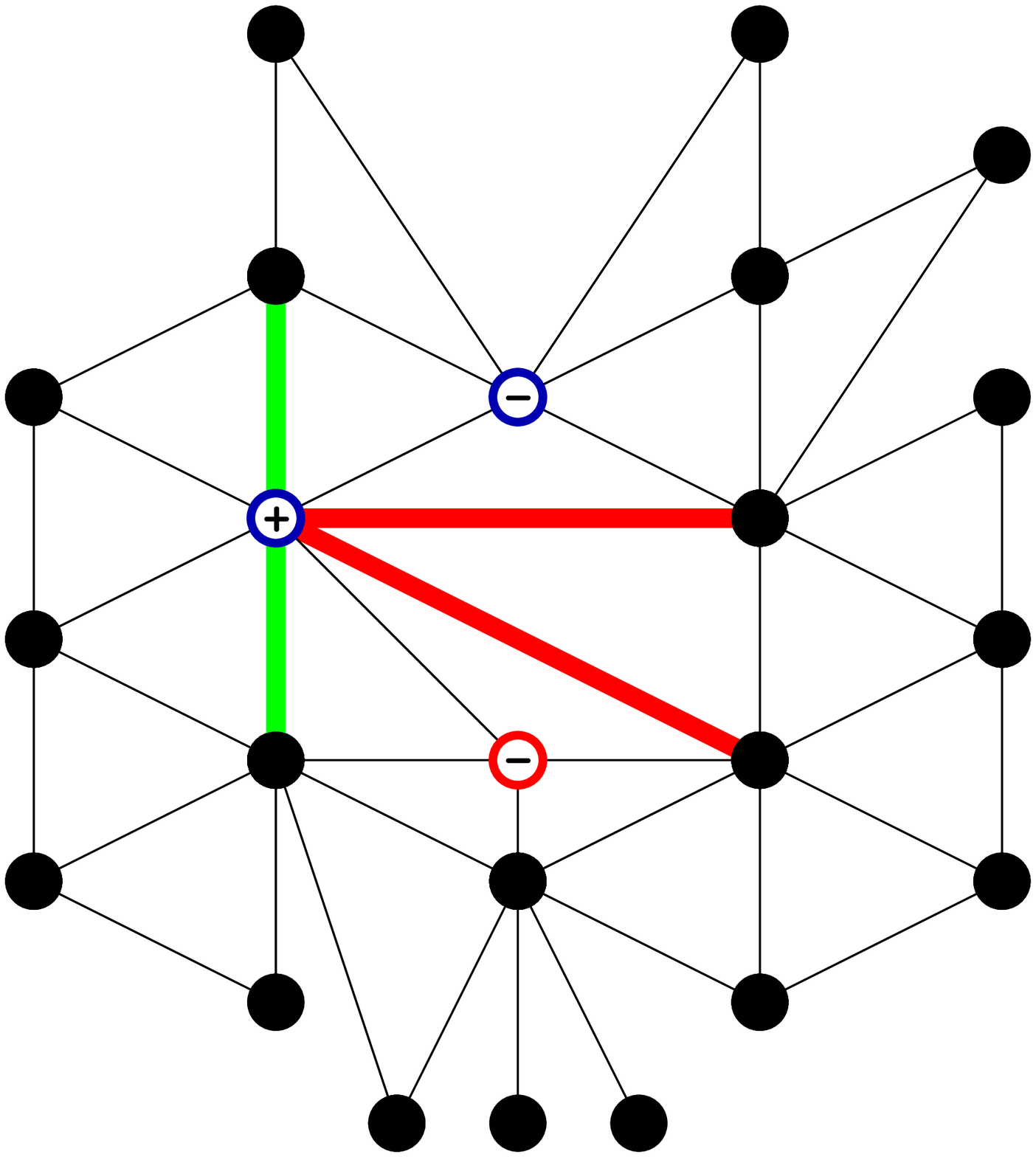,width=0.4\textwidth}\quad \raisebox{80pt}{\epsfig{file=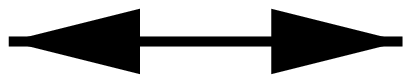,width=0.07\textwidth}}\qquad\quad
    \epsfig{file=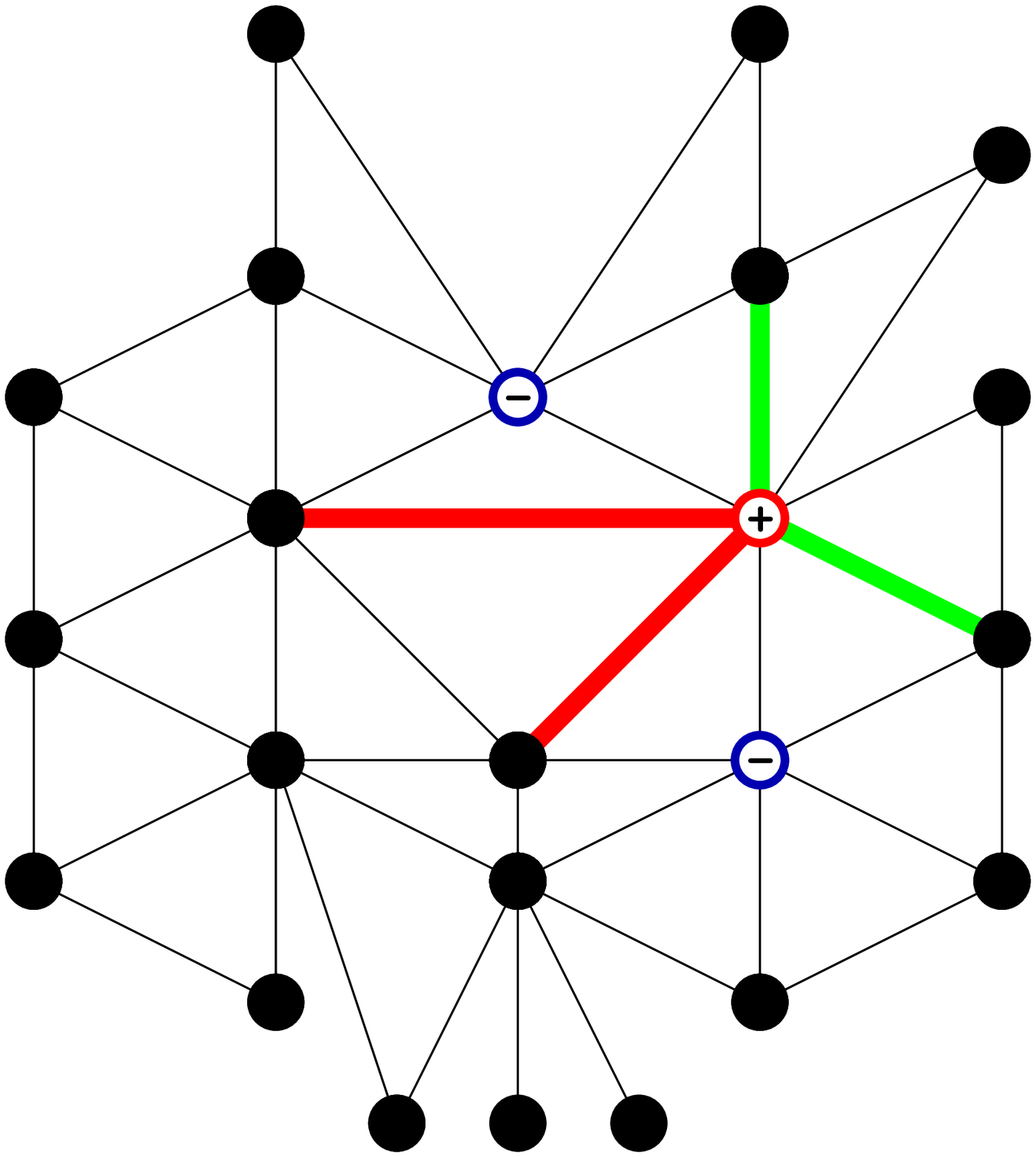,width=0.4\textwidth}\\
\hspace{0.3\textwidth}\epsfig{file=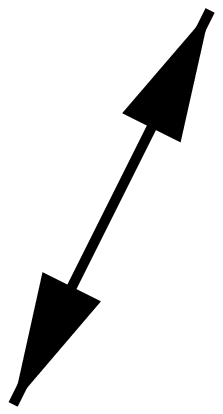,width=0.04\textwidth}\\
\raisebox{20pt}{\epsfig{file=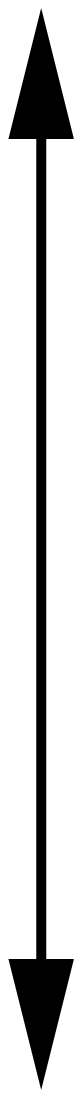,width=0.04\textwidth}}\hspace{0.2\textwidth}    \epsfig{file=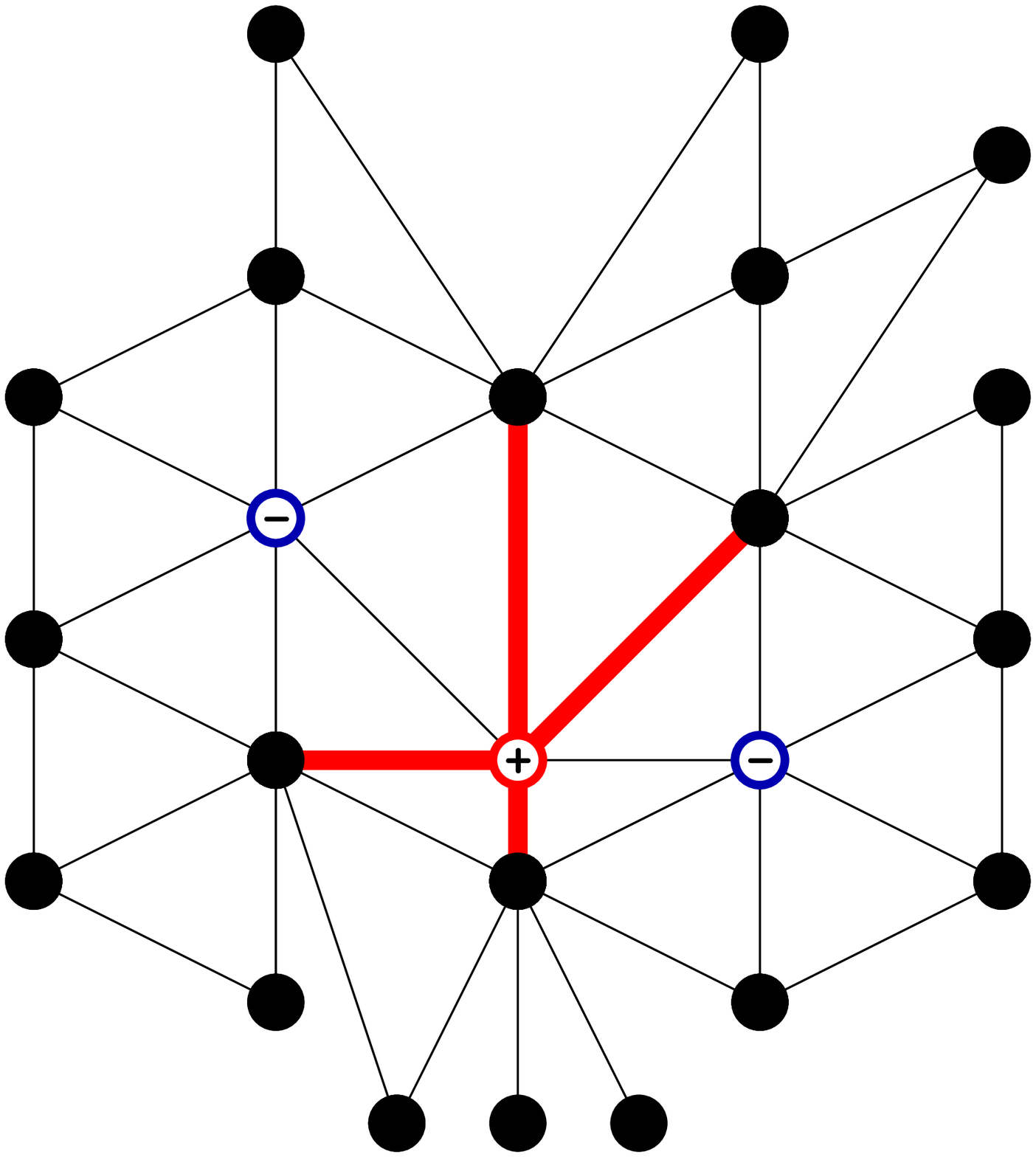,width=0.4\textwidth}\\
\hspace{0.3\textwidth}\epsfig{file=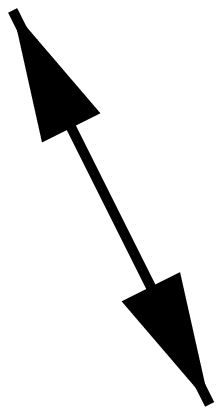,width=0.04\textwidth}\\
    \epsfig{file=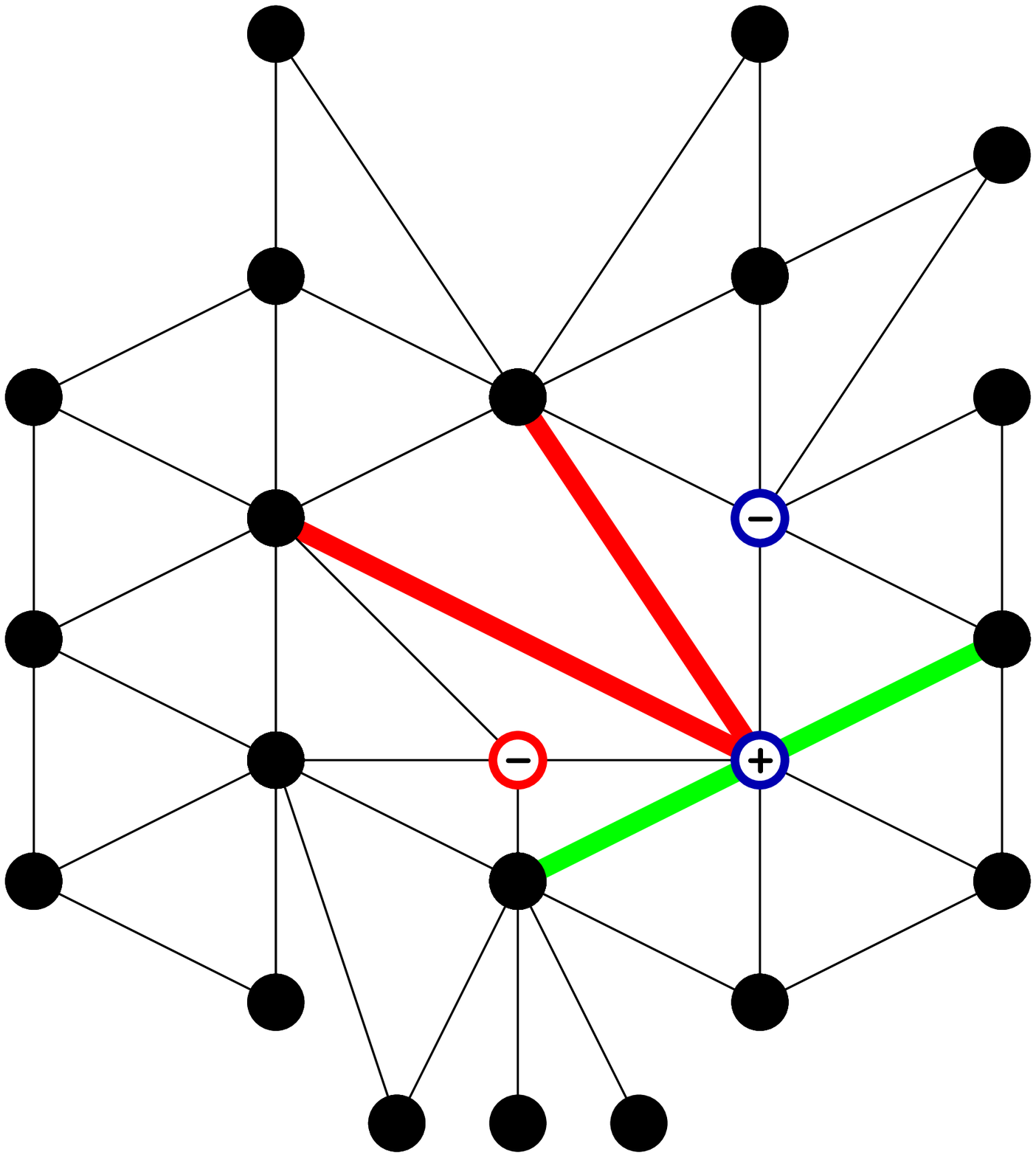,width=0.4\textwidth}\quad \raisebox{80pt}{\epsfig{file=arrow_horiz.eps,width=0.07\textwidth}}\qquad\quad
    \epsfig{file=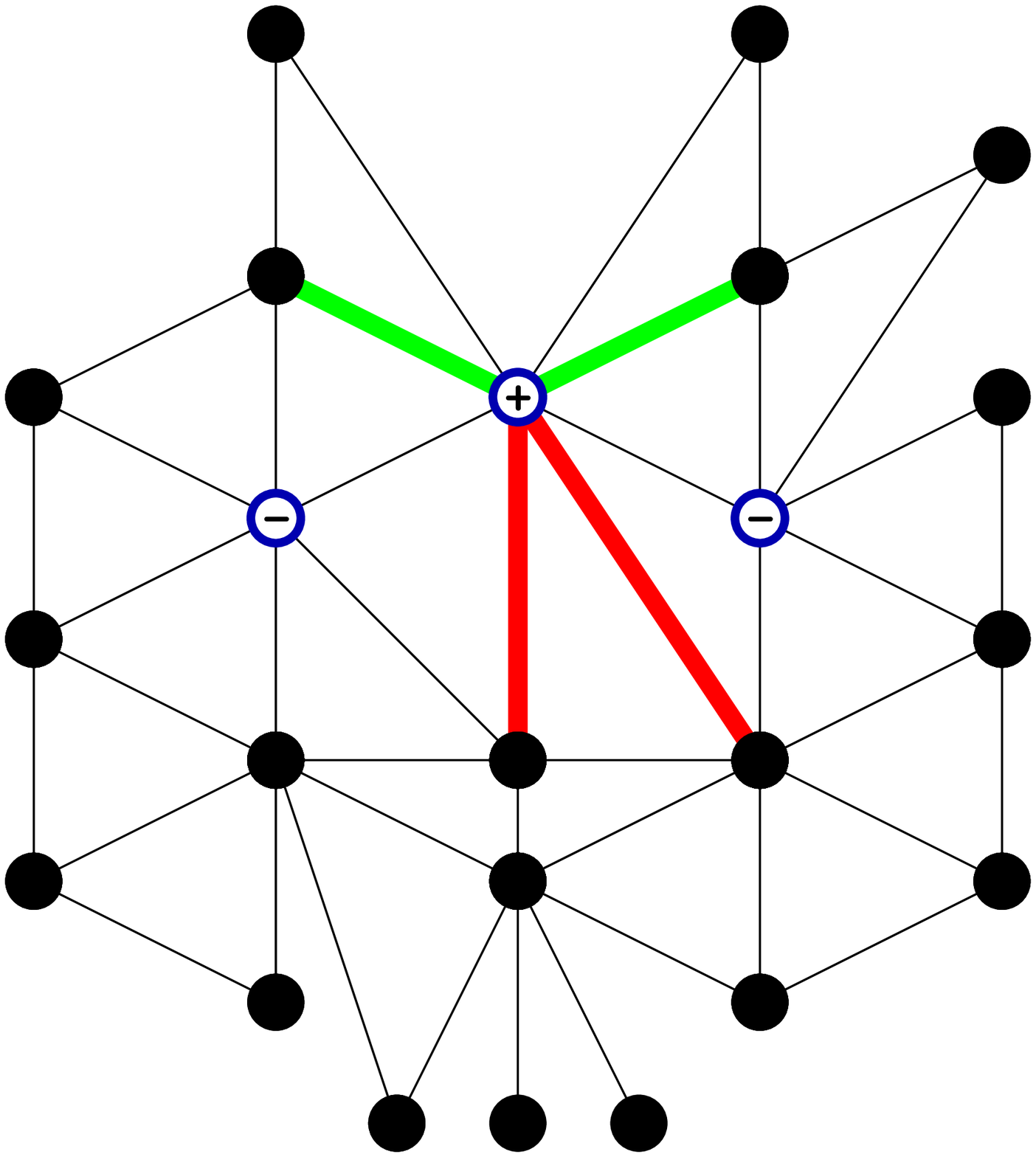,width=0.4\textwidth}
  \end{center}
  \caption{The central figure (with only red links) is symmetric along the axis {\tt-}\+\m.
 If we flip the long vertical line, we arrive at the figure
 top-right. If we flip in it the red link which does not lead back to the
 center, we arrive at top-left. Flipping the red link which does not
 lead back to top-right, we arrive at bottom-left, then at
 bottom-right, and then back to the center.
Since the same happens for the two lower links of the center, we see
that the local state space is a figure ``8'' with 9 nodes of which 8
have two exits each. The state space can be symbolized as in
\fref{f:collision-d4}.}
\label{f:collision-t4}
  \end{figure}
  No annihilation is possible here and the outcome of the collision is always one pair and one defect.
  The only relevant question is what is the probability that the
  defect will have moved at the end of the collision. But the
  combinatorics is more involved.
  
  The pair enters and may exit the collision through a green link.
  Flipping a red link on the other hand will not end the collision. Notice that the fifth diagram contains 4 red links and no 
  green ones. Moving a red link will visit the 6 figures
  sequentially. But moving the  two lower red links in the lowest
  left figure will lead to another circle of five configurations,
  which is not shown in the figure.
  This collision case can be represented by a ``state diagram'' as in
  \fref{f:collision-d4}, where each node represents a state and each
  link represents the effect of flipping one of the colored links in
  \fref{f:collision-t4}.
\begin{figure}
  \begin{center}
    \epsfig{file=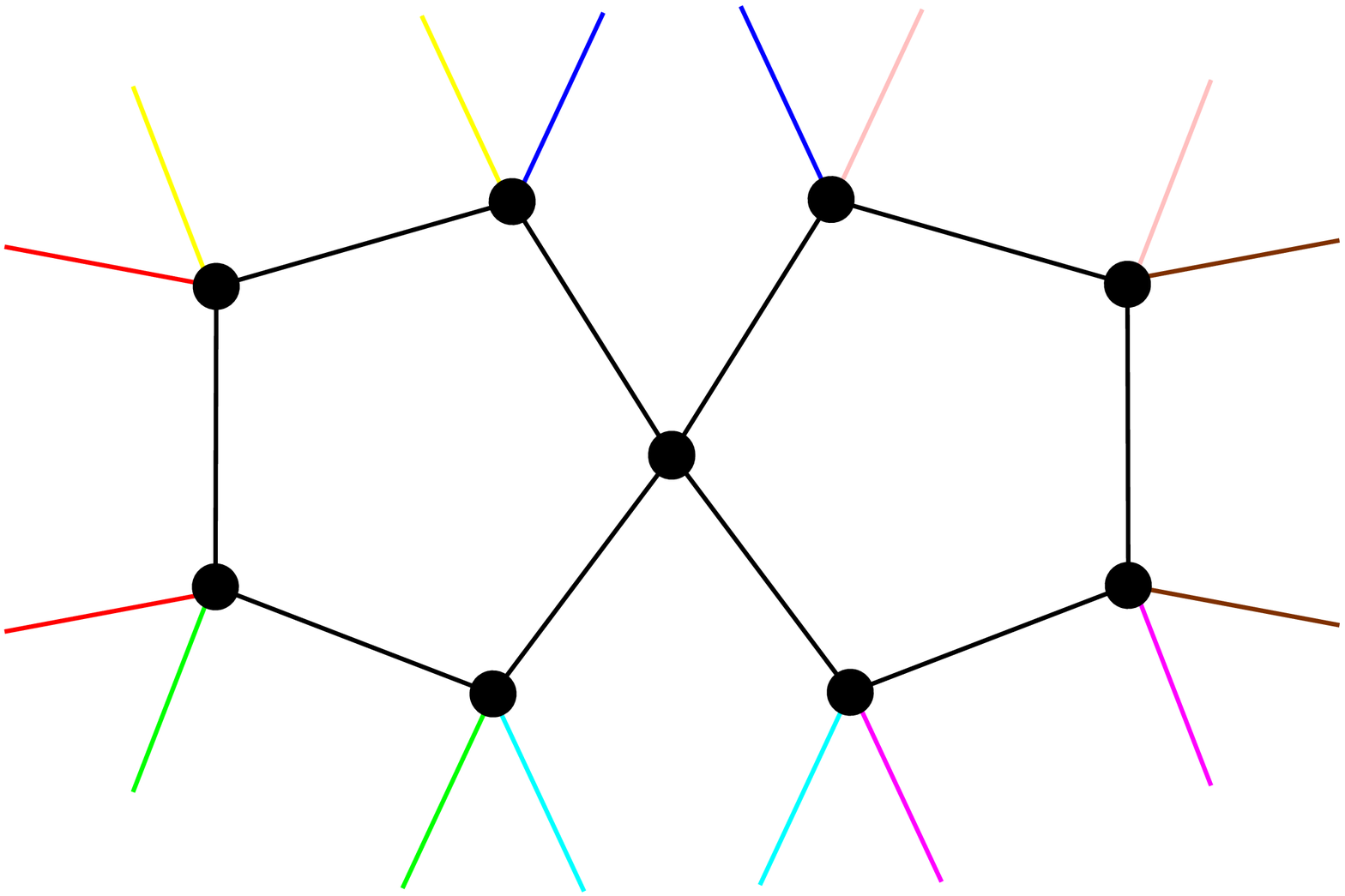,width=0.5\textwidth}
  \end{center}
  \caption{Each vertex represents one possible configuration during
    the collision of \fref{f:collision-t4}. 
  Two vertices are linked if one can go from one configuration to the
  other by flipping a (red) link of \fref{f:collision-t4}.
  The pair enters and exits the collision through one of the 16 dangling links. 
  If these 2 dangling links are the same or if they are of the same
  color, then the pair does not push the defect, otherwise, it is
  pushed by 1 step. }\label{f:collision-d4}
\end{figure}
The pair enters the collision through a dangling link $\ell_1$. It can
wander around the vertices of the state diagram before exiting through a dangling link $\ell_2$.

If $\ell_1 = \ell_2$, then it is as if the collision never occurred. In particular, the defect does not move.
Furthermore, if $\ell_1$ and $\ell_2$ are of the same color, then the defect will remain in its initial position at the end of the collision.
Using this remark and the diagrams of \fref{f:collision-t4}, one can explicitly compute the (conditional) probability that a pair pushes a defect if the collision is of the above type.
This probability will be temperature independent.

The other 7 cases are treated similarly, and this completes the proof
of \pref{p:pairdefect}.

Note that the proof means that collisions lead, on average to a
\emph{positive} probability of moving a defect. \emph{This mechanism
  is the basic reason for the diffusive wandering of the defects in
  the triangulations. It is mediated by the collision of pairs with
  the defects. Clearly, if there are no pairs, the defects can not
  move by
  this mechanism, but only through much less probable events.}

\section{Relevant and Irrelevant Pairs}\label{s:irrel}

In \sref{s:lifetime}, we have seen that a pair lives long enough to
explore its 1D path, before being destroyed by other mechanisms. We
now analyze in detail what can happen during this exploration phase.

When a pair is
created, it is one step away from its birthplace. It
will then perform a random walk on its predefined 1-dimensional path.
Each time it comes back to its birthplace, it can die with
probability $p_{\rm death}=1/3$ as shown in \fref{f:collision-t1}. \emph{If this
  happens, the triangulation will not have changed.} We will call this
an ineffective pair. The probability $P_{\rm I}=P_{\rm I}(\xi)$ can be
estimated as follows:

Assume that a defect $X$ is at a distance $\xi$ from the birthplace
of the pair. Then, by extending slightly the gambler's ruin principle \cite{Feller1957},
\emph{the probability $P_{\rm R}=P_{\rm R}(\xi )$ that the pair actually can reach $X$ is
  $(1+(\xi-1)\cdot p_{\rm death} )^{-1}=\OO(1/\xi)$}. This implies that the probability for any event implying
$X$ when starting from the birthplace depends on $\xi$, and in the
case of many defects, on their average distance (which we call again $\xi$).
Thus,
\begin{equ}\label{e:PI}
P_{\rm I} = 1-\OO(1/\xi)~, \quad P_{\rm R}=\OO(1/\xi)~.
\end{equ}

\section{Time Correlations at Equilibrium}\label{s:timecorr}

Here, we estimate the rate of change of triangulations (as a function
of time).
Since our triangulations are purely topological, we need to define
what we mean by the distance between 2 triangulations $T_1$ and $T_2$
in $\TT_N$ (the space of triangulations of the sphere with $N$ labeled nodes). There are many possible choices, see \eg,
\cite{davisonsherrington2000} many of which lead to equivalent
metrics. The one
defined below
is convenient for our purpose.

Let $\left\{T_1,T_2\right\}\subset\TT_{N}$.
Consider a node $n$ of $T_1$. The flower $f\left(n,T_1\right)$ of $n$ is defined
as the ordered cyclic set of all neighbors of $n$ in $T_1$.
Two flowers are then said to be equal if one can be obtained from the other by a cyclic rotation. 
We can now define the following metric on $\TT_{N}$:
\begin{equa}
 d\left(T_1,T_2\right)&=\sum_{n=1}^{N}{d_n\left(T_1,T_2\right)}~ \text{and} \\
 d_n\left(T_1,T_2\right)&= \left\lbrace 
  \begin{tabular}{ l l }
    0 & if $f\left(n,T_1\right)=f\left(n,T_2\right)$~, \\
    1 & otherwise~.\\
  \end{tabular}
\right. 
\end{equa}
Using this metric, we define the following correlation function:
\begin{equ}
 C(\theta)=1-\frac{d\left(T(t),T(t+\theta)\right)}{N}\text{~,}
\end{equ}
where $T(t)$ is the system state at time $t$.
Our result for the decay of this function at equilibrium, \ie, when
$t\rightarrow\infty$, is as follows:
\begin{proposition}\label{decay}
 The correlation function $C$ decays like
\begin{equ}\label{e:cor}
 C(\theta)=e^{-{\theta}/{\tau_r}}\text{~,}
\end{equ}
with a relaxation time $\tau_r$ of the form
\begin{equ}\label{e:taur}
 \tau_r=\OO(e^{3\beta })=\OO(\epsilon^{-3})\text{~.}
\end{equ}
\end{proposition}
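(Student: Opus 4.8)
The plan is to reduce the decay of $C(\theta)$ to a count of the flowers that are \emph{irreversibly} changed during $[t,t+\theta]$, and then to read off the rate at which a single fixed flower is changed. I measure $\theta$ in lattice sweeps, so that $\tau_r$ comes out $N$-independent. First I would dispose of the two abundant energy-neutral moves isolated in \sref{s:dynamics-eq}: the \+\+ and \mm blinkers, and the back-and-forth walk of a \+\m pair. Both are reversible, and at any instant they displace only the $\OO(\epsilon^2 N)$ flowers that currently carry a live pair or blinker; an ineffective pair (the overwhelming majority, by \eref{e:PI}) is reabsorbed at its birthplace and restores the triangulation exactly. Hence these processes add at most a bounded $\OO(\epsilon^2)$ term to $d(T(t),T(t+\theta))/N$ and cannot drive the decay. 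The decay is therefore controlled by the rarer, irreversible rearrangements that genuinely transport charge, namely the creation/annihilation events ($\dE=\pm2$) of \tabref{table:2}, whose rate is $\OO(\epsilon^3)$ per attempted flip.

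The crux is that each irreversible event decorrelates just enough flowers to compensate its rarity. By \eref{e:PI} a pair created at a given site reaches another defect, and thereby relocates charge, only with probability $P_{\rm R}=\OO(1/\xi)$, where $\xi=\OO(\epsilon^{-1/2})$ is the mean inter-defect distance of \sref{s:lifetime}. When it does succeed, however, it has glided a net distance of order $\xi$ along its predefined one-dimensional path, so that its trail consists of $\OO(\xi)$ permanently flipped diagonals and hence $\OO(\xi)$ altered flowers. There are $\OO(\epsilon^3 N)$ creations per sweep; multiplying by the effective fraction $\OO(1/\xi)$ and by the $\OO(\xi)$ flowers changed per effective event, the two powers of $\xi$ cancel, and $\OO(\epsilon^3 N)$ flowers are irreversibly altered per sweep. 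Dividing by $N$, a fixed flower is touched at rate $1/\tau_r=\OO(\epsilon^3)=\OO(e^{-3\beta})$, independently of the defect density.

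To turn this rate into the exponential law \eref{e:cor}, I would model the successive irreversible hits on a given flower as a Poisson process of intensity $1/\tau_r$. This is legitimate because the events are rare, of order $\epsilon^3$, and, under the near-independence of the charge distribution corroborated in \sref{s:colordist} and bounded by $\OO(\epsilon^3)$ in \eref{e:pdE3}, essentially uncorrelated in time; moreover, once a flower is touched its neighbouring charges are reshuffled and the chance that it returns to its exact initial cyclic pattern is negligible. The probability that flower $n$ is still unaltered at time $\theta$ is then $e^{-\theta/\tau_r}$, and since $C(\theta)=\frac1N\sum_{n}\left(1-d_n(T(t),T(t+\theta))\right)$ is exactly the mean of these survival probabilities, averaging over the $N$ flowers yields $C(\theta)=e^{-\theta/\tau_r}$ with $\tau_r=\OO(\epsilon^{-3})$.

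The main obstacle is the separation of scales underlying the first two paragraphs. One must show that the $\dE=0$ motion, though a full order in $\epsilon$ more frequent than the decorrelating events, produces no net decay, and that the cancellation of the two factors of $\xi$ is exact rather than leaving a residual power of $\epsilon$. The latter hinges on counting precisely how many distinct diagonals a gliding pair leaves flipped after a net one-dimensional displacement $\xi$, equivalently how much of its trail survives the final annihilation, and on checking that the purely geometric move-probabilities of \pref{p:pairdefect} do not smuggle in a hidden $\xi$-dependence. I expect the clean single-exponential form, as opposed to a stretched-exponential or multi-scale decay, to be the most fragile part of the claim, and the one for which the numerical simulations furnish the essential corroboration.
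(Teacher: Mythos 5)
Your proposal is correct and follows essentially the same route as the paper: the decay is attributed to the $\OO(\epsilon^2)$-dense pairs, of which only a fraction $P_{\rm R}=\OO(1/\xi)$ are effective, each effective one altering $\OO(\xi)$ flowers, so that the per-flower decorrelation rate is $\OO(\epsilon^3)$ and the exponential form follows from the resulting linear rate equation. Your bookkeeping via creation events per sweep is equivalent to the paper's via pair density and the traversal time $\delta\theta=\tfrac12\nu^2\xi^2$ (since the creation rate equals $p/\tau_{\rm lifetime}$ with $\tau_{\rm lifetime}=\OO(\xi^2)$), and both yield $\tau_r=\OO(\epsilon^{-3})$.
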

\begin{proof}
The correlation function $C(\theta)$ is nothing but the fraction of nodes whose flower
is unchanged after $\theta$ time units.
At equilibrium, the number of pairs $p$ was established in
\eref{e:qpm} to be
$p=70/36\cdot \epsilon^2$. On the other hand, the density of defects
in equilibrium is $\OO(\epsilon )$ and hence, their average distance
$\xi$ equals $\xi=\OO(\epsilon ^{-1/2})$. By the estimates of
\sref{s:irrel}
 this means that the effective number of pairs which
change the configuration in a permanent way is $\OO(p\cdot\epsilon
^{-1/2})$. 
\begin{figure}
 \begin{center}
    \epsfig{file=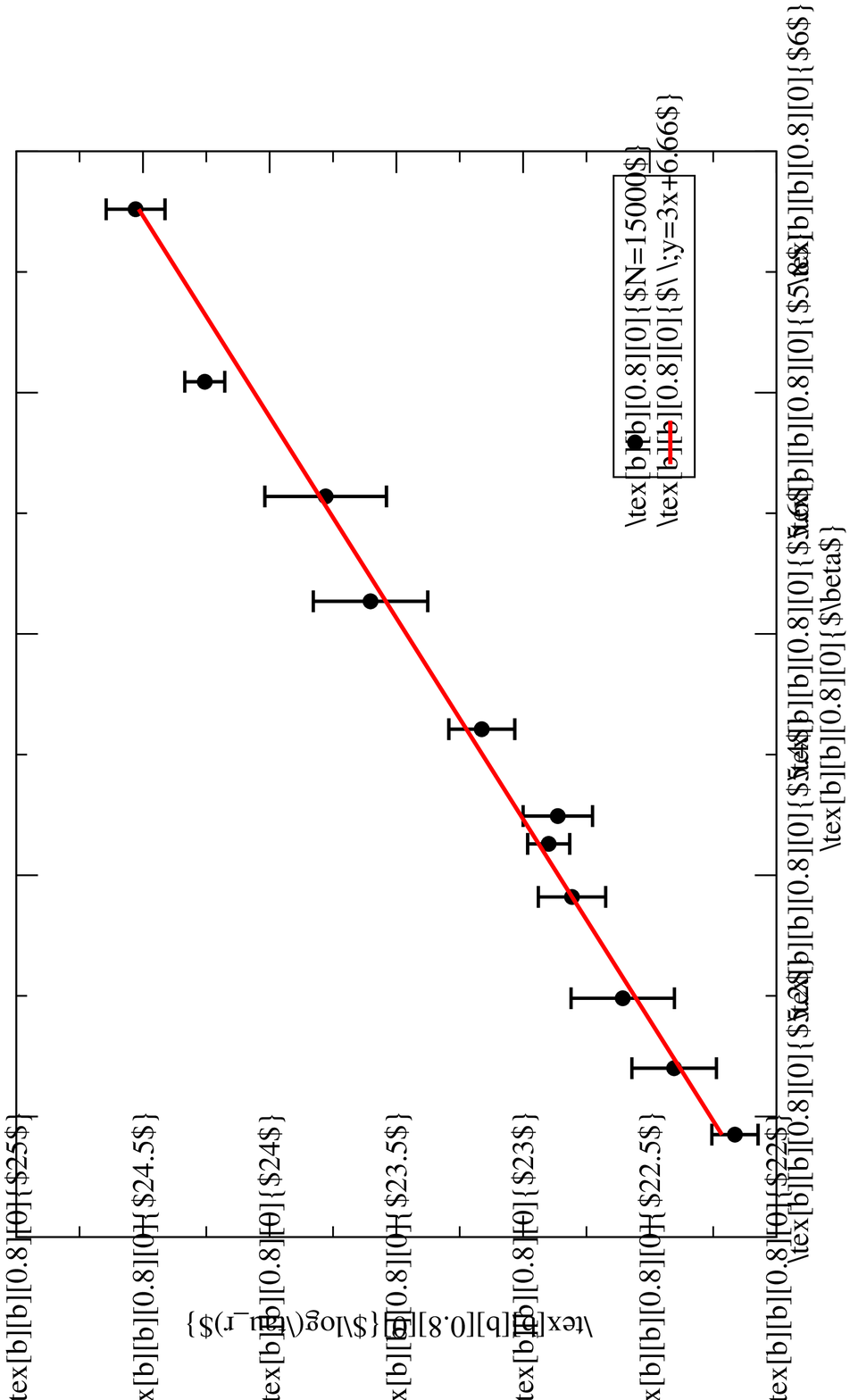,width=0.5\textwidth,angle=270}
  \end{center}  
     \caption{Decay rate of correlations at equilibrium. Numerical
       verification of Eqs.~(\ref{e:cor}) and (\ref{e:taur}). The data
     are averages over 10 runs with $N=15'000$. The error bars
     represent 1 standard deviation. The variable $\beta $ is equal to
     $-\log(\epsilon )$. The fits are for $C$ between $0.5$ and $0.001$.}\label{f:correlation}
  \end{figure}
We further saw in \sref{s:pdcollisions} that the number of collisions 
a relevant
pair will undergo is a temperature independent constant $\nu=\OO(1)$.
If $\xi$ is the average distance between 2 defects,
then, on average, this pair will change, on its way, the flowers of
$2\nu\xi$ nodes. 
At time $\theta$, each of these flowers is still unchanged with probability $C(\theta)$.

Since the pair makes a 1D random walk, all this happens within
an average time interval $\delta \theta =
\frac{1}{2}\nu^2\xi^2$. 
This in turn leads to
\begin{equ}
 C(\theta +\delta \theta ) = C(\theta)-2pP_{\rm R}\nu \xi C(\theta)\text{~.}
\end{equ}
In the limit $\theta \gg \delta \theta$, we find
\begin{equ}
 \dot{C}(\theta)=-4\frac{pP_{\rm R}}{\nu \xi}C(\theta)\text{~,} 
\end{equ}
and this leads to \eref{e:cor} with
\begin{equ}\label{e:taur2}
 \tau_r = \frac{\nu \xi(\epsilon)}{4p(\epsilon)P_{\rm R}(\xi)}\text{~.}
\end{equ}
Finally, using
\begin{equ}\label{e:s}
 P_{\rm R}(\xi) = \OO(1/\xi) = \OO\left(\epsilon^{1/2}\right)\text{~,}
\end{equ}
\eref{e:taur} follows from Equations~(\ref{e:taur2}) and (\ref{e:s}).
\end{proof}

\section{The Aging Process}

By the aging process, we mean the approach of the energy to its
equilibrium value. Since the energy is by and large just the density
$d(t)$ of defects we can formulate the result as
\begin{estimate}
Under the assumptions $\epsilon N>D_0$ and $\epsilon <\rho$ one has
for the density $d$ of defects:
 \begin{equ}\label{e:bound}
 d(t) = \OO(\left( \epsilon^2t \right)^{-2/5})\text{~.}
\end{equ}
 \end{estimate}
Note that this result differs from that proposed in \cite{sh2002}, where
the decay rate was given as $(\epsilon ^2t)^{-1/2}$. This difference
is caused by our observation that the diffusion constant of the defects
actually depends on their density, because, if they are rarer, the
pairs, which are the only ones able to move them around, need longer to
find them.

Power decay rates are extremely hard to distinguish, but we have
performed some tests which are illustrated in \fref{f:decay}. They give
a slight advantage to a decay of $-0.4$ as compared to $-0.5$. 
\begin{figure}
   \begin{center}
    \epsfig{file=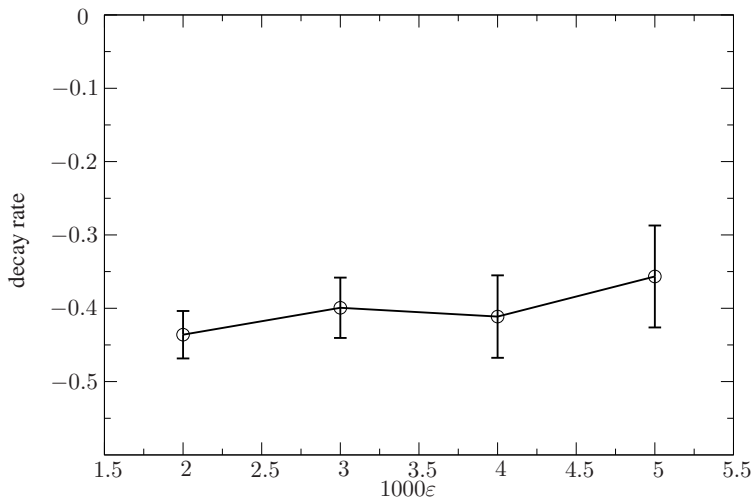,width=0.5\textwidth,angle=270}
  \end{center}
   \caption{The decay rates of several simulations with $N=15000$ and
    $\epsilon=0.002$ to $0.005$.}\label{f:decay}
  
\end{figure}

\begin{proof}We study the aging process by assuming that, in approach to equilibrium, the system
is in a quasistationary state, with charge density $c=E/N$.
Here, and
in the sequel, time will be in units of $\tau=(3N-6)/2$. 
Let $d(t)$ and $p(t)$ be the density of defects and pairs respectively.
Then, up to terms of order $\OO(\epsilon ^3)$ one has $c=d+2p$.

As we will see in this section, the quasistationarity assumption simply means 
that the relaxation of the energy is a consequence
of the \emph{annihilation of colliding defects}. The number of
pairs is, up to fluctuations, essentially unchanged during the process
we consider.

\subsection{Three timescales}
We saw that a fraction $1-\OO\left( \epsilon \right)$  of all
occurring flips in the system do not change the energy, and are either
motions of pairs or blinkers.
Of those, the only relevant ones are  the wandering pairs,
which induce diffusion of the defects as we have seen in
\sref{s:timecorr}.
The discussion of the
equilibrium probabilities apply also to states close to equilibrium,
which is the regime we want to consider now.

The pair dynamics happens on the time scale $\tau_{\rm pair}=\tau$ and it conserves both the 
number of pairs $p(t)$ and the number of defects $d(t)$.

The next slower time scale concerns creation and annihilation of
pairs. 
Even though this
changes $p(t)$, it conserves $d(t)$.
Whenever one of these events happens, defects are pushed around
by the pairs with some geometrically defined probability, and this
leads to a diffusion, whose constant $D(t)$  measures this second time
scale $\tau_{\rm diffusion}=D^{-1}(t)$.

The third time scale $\tau_{\rm meeting}$ is related to collision
rate $\gamma(t)$ of defects; $\tau_{\rm meeting}=\gamma^{-1}(t)$. They
undergo a 2D random walk. Sooner or later, 2 defects of
opposite charges  
will meet and will form a new pair which will run on timescale $\tau $
until it annihilates. 
In the regime we consider, \emph{only}
this sequence of events (collision and running pair) of the dynamics destroys 2 defects and, as a consequence, is responsible for the relaxation of
the energy. Given the 3 time scales, the derivation of the decay rate
is now rather straightforward.

\subsection{The quasistationarity assumption and the density of pairs}
By the previous discussion,
\begin{equ}
   \tau_{\rm meeting}(t) \gg  \tau_{\rm diffusion}(t)\gg \tau_{\rm pair}(t)=\tau=1~.
\end{equ}
The orders of magnitude of these quantities near equilibrium are
\begin{equ}
  \tau_{\rm meeting}(t) =\OO(\epsilon^{-2} d^{-7/2})~,\quad \tau_{\rm
    diffusion}(t)=\OO(\epsilon ^{-2}d^{-1/2}), \quad \tau_{\rm pair}(t)=1~.
\end{equ}
Consider a system for which, at time 0, $d(0)\gg1$ and $p(0)\gg1$.
It is clear that the relaxation of pairs is much faster than that of defects.
We will assume that pairs are always at equilibrium density, \ie, that
creation and destruction rates of pairs are equal and $p(t)$ is
independent of $t$.
\begin{remark}
 The above discussion implies that $p(t)$ is 
 constant over time intervals of order $\tau_{\rm meeting}(t)$. In
 fact, both creation and annihilation 
 events necessitate the presence of defects so that the creation and destruction rates of pairs 
 will be linear in $d(t)$ at low density. This implies that $p$
 depends on $t$ only through the value of $d(t)$.
By abuse of notation, we will write $p(d)$ instead of $p\bigl(d(t)\bigr)$.
\end{remark}

The creation rate of pairs is $12d\epsilon^2$ and the destruction rate is simply $p(d)/\tau_{\rm lifetime}$.
Therefore, by balancing the rates, we find:
\begin{equ}\label{e:p}
 p(d)=12d\,\tau_{\rm lifetime}\,\epsilon^2\text{~.}
\end{equ}
Since a pair needs to diffuse from one defect to the other in order to annihilate, we estimate 
that $\tau_{\rm lifetime}=\OO\left( \xi^{2} \right)=\OO( d^{-1})$.
This implies that the density of pairs 
is
$p(d)=\OO( \epsilon^2)$.

\subsection{The diffusion constant of single defects}
Repeating the arguments of \sref{s:timecorr}
the average number of collisions $\nu$ and the average number
of moved defects $\eta$ are temperature independent constants.
The diffusion constant of a defect is simply the probability that a
given defect moves by one space unit during one time unit and it is given
by 
\begin{equs}
 D\cdot d &= \frac{2p(d)\,P_{\rm R}(\xi)\eta}{\nu^2\xi^2}\text{~,} \\
 D(d) &=\OO\bigl( p(d) \cdot P_{\rm R}(\OO(d^{1/2}))\bigr)\text{~.}
\end{equs}
Using Equations~(\ref{e:p}) and (\ref{e:s}), this leads to
\begin{equ}
 D(d) =\OO( \epsilon^2 \cdot d^{1/2})\text{~.}
\end{equ}

\subsection{Collision rate of single defects and relaxation coefficient}
The annihilation of 2 diffusive particles $A+B\rightarrow\emptyset$
has been studied in depth in
\cite{ToussaintWilczek1983,Cardy1995,Sasada2010}. Here, we use the
mean field argument of \cite{ToussaintWilczek1983}, to deduce the
collision rates. However, there will also be particle creation. On the
other hand, \eg, in \cite{Sasada2010} creation is indeed considered,
but the study is for a fixed substrate, namely the lattice
$\mathbb{Z}^2$, while our study is on a more floppy domain.

Given a 2D gas of 2 particles $A$ and $B$ of equal densities $d/2$
such that the diffusion constants $D_A=D_B=D$, it can be deduced from 
\cite{ToussaintWilczek1983} that the collision rate $\gamma$ is
\begin{equ}
 \gamma(d) = \OO(  D d^3)\text{~.}
\end{equ}
Extending this identity to a varying diffusion constant,
we end up with
\begin{equ}
 \dot{d}=-2\gamma(d)=-\OO( \epsilon^2 \cdot d^{7/2})\text{~,}
\end{equ}
where we assumed that we are far enough from equilibrium to neglect the creation rate of defects.
\end{proof}

Note that this result differs from that proposed in \cite{sh2002}, where
the decay rate was given as $(\epsilon ^2t)^{-1/2}$. This difference
is caused by our observation that the diffusion constant of the defects
actually depends on their density, because, if they are rarer, the
pairs, which are the only ones able to move them around need longer to
find them.

\section*{Acknowledgements}This research was partially supported by
the Fonds National Suisse.

%\bibliographystyle{tPHM}
%\bibliography{refs}  

\def\Rom#1{\uppercase\expandafter{\romannumeral #1}}\def\u#1{{\accent"15
  #1}}\def\cprime{$'$} \def\cprime{$'$}

\end{document}